\def\ps@headings{
\def\@oddhead{\mbox{}\scriptsize\rightmark \hfil \thepage}
\def\@evenhead{\scriptsize\thepage \hfil \leftmark\mbox{}}
\def\@oddfoot{}
\def\@evenfoot{}}
\newtheorem{thm}{Theorem}[section]
\newtheorem{lem}[thm]{Lemma}
\newtheorem{cor}[thm]{Corollary}
\newtheorem{rem}[thm]{Remark}
\begin{document}

\title{Optimal Radio Resource Allocation for Hybrid Traffic in Cellular Networks: Centralized and Distributed Architecture}
\author{Mo~Ghorbanzadeh,~\IEEEmembership{}
        Ahmed~Abdelhadi,~\IEEEmembership{}
        and~Charles~Clancy~\IEEEmembership{}
\thanks{Part of this work was accepted at IEEE ICNC CNC Workshop 2015 \cite{GhorbanzadehCNC2015_1}.

M. Ghorbanzadeh, A. Abdelhadi, and C. Clancy are with the Hume Center for National Security and Technology, Virginia Tech, Arlington,
VA, 22203 USA e-mail: \{mgh, aabdelhadi,tcc\}@vt.edu.}}

%\markboth{Journal of \LaTeX\ Class Files,~Vol.~11, No.~4, December~2014}%
%{Shell \MakeLowercase{\textit{et al.}}: Bare Demo of IEEEtran.cls for Journals}

\maketitle

\begin{abstract}
Optimal resource allocation is of paramount importance in utilizing the scarce radio spectrum efficiently and provisioning quality of service for miscellaneous user applications, generating hybrid data traffic streams in present-day wireless communications systems. A dynamism of the hybrid traffic stemmed from concurrently running mobile applications with temporally varying usage percentages in addition to subscriber priorities impelled from network providers' perspective necessitate resource allocation schemes assigning the spectrum to the applications accordingly and optimally. This manuscript concocts novel centralized and distributed radio resource allocation optimization problems for hybrid traffic-conveying cellular networks communicating users with simultaneously running multiple delay-tolerant and real-time applications modelled as logarithmic and sigmoidal utility functions, volatile application percent usages, and diverse subscriptions. Casting under a utility proportional fairness entail no lost 
calls for the proposed modi operandi, for which we substantiate the convexity, devise computationally efficient algorithms catering optimal rates to the applications, and prove a mutual mathematical equivalence. Ultimately, the algorithms performance is evaluated via simulations and discussing germane numerical results.
\end{abstract}

\begin{keywords}
Utility function, Hybrid traffic, Convex optimization, Centralized algorithm, Distributed algorithm, Optimal resource allocation, Dual problem.
\end{keywords}

\providelength{\AxesLineWidth}       \setlength{\AxesLineWidth}{0.5pt}
\providelength{\plotwidth}           \setlength{\plotwidth}{8cm}
\providelength{\LineWidth}           \setlength{\LineWidth}{0.7pt}
\providelength{\MarkerSize}          \setlength{\MarkerSize}{3pt}
\newrgbcolor{GridColor}{0.8 0.8 0.8}
\newrgbcolor{GridColor2}{0.5 0.5 0.5}

\section{Introduction}\label{sec:intro}
Mobile broadband services have been falling afoul of a perennially upsurged demand for radio resources during recent years. This upswing owes to the gigantic boom in mobile service subscribers' quantity as well as to the outgrowth of their generated traffic volume \cite{EricssonMobilityReport2013}. On the other hand, the migration of cellular network providers from offering a single service such as the Internet access to a multi-service framework, like multimedia telephony and mobile-TV \cite{QoS_3GPP}, along with the emergence and prevalence of smartphones hosting simultaneously running delay-tolerant and real-time applications with distinctive quality of service (QoS) requirements \cite{GhorbanzadehICNC2013} arise an urgency to dynamically provisioning various bit rates to the application traffic so as to elevate users' quality of experience (QoE) tightly bound to the subscriber churn \cite{QoS_3GPP}. As such, incorporating service differentiation mechanisms into resource allocation methods is a matter of 
high consequence. Inasmuch as applications' temporal usage percentage directly impacts the generated traffic volume and nature, e.g. the traffic elasticity, including the usage percentage as an application status differentiation in resource allocation schemes is worthwhile. Besides, cellular network providers capability to adopt a subscription-based differentiation \cite{QoS_3GPP}, wherein miscellaneous clients of an identical service receive differentiated subscription-based treatments (corporate vs. private, post-paid vs. pre-paid, and privileged vs. roaming users), can fine-tune resource allocation approaches. Henceforth, resource allocation modi operandi can accommodate diverse exigencies of present-day wireless networks conveying the hybrid traffic by accounting for all the aforementioned issues. Nonetheless, the majority of resource allocation proposals fizzle to address the aforesaid concerns collectively (section \ref{sec:related}).

This paper puts forward a novel convex utility proportional fairness maximization formulation for an optimal resource allocation in wireless networks and is outfitted with the subscriber, application status, and service differentiations parameterized respectively as user equipment (UE) subscription weights, application status weights, and application utility functions. The weights are supplied by network providers so that a foreground-running application such as a voice call attains a higher application status weight than do the background-running ones, e.g. an automatic application update process. Mobile subscribers of the system under our consideration can concurrently run multiple applications with their utility functions and statuses depending on the generated traffic nature and instantaneous usage percentage, respectively.

Moreover, casting the service differentiation under a utility proportional fairness policy prioritizes the real-time traffic over the delay-tolerant one, conducive to fulfilling QoS requirements. In addition to solving the formalized optimization problem analytically, we develop distributed and centralized solution procedures as computationally efficient algorithms excerpted from Lagrangians of the resource allocation's dual problems \cite{Boyd} and perform necessary simulations to validate leveraged methodologies. For the distributed case, the rate assignment process is realized in double stages which first optimally allocates UEs the Evolved Node B (eNB) resources via their mutual collaborations and then disseminate UE bandwidths to the running applications internally to the UEs in an optimum fashion. In contrast, the devised centralized routine allots hybrid application rates in a monolithic stage transacted in the cellular network provider side of the communications system.

\subsection{Related Work}\label{sec:related}
The resource allocation optimization research area has received a significant attention since the seminal network utility maximization study in \cite{kelly98ratecontrol} which allocated user rates through a utility proportional fairness maximization solved by the Lagrange multipliers \cite{Boyd}. Soon after, an iterative algorithm relying on the duality of the aforementioned resource allocation problem was proposed \cite{Low99optimizationflow}. Whilst the traffic in these early research works had an elastic nature common for wired communication systems and approximated by concave utility functions, the advent and prevalence of high-speed wireless networks have entailed an increased utilization of real-time applications whose utility functions grow non-concave \cite{Shenker95fundamentaldesign}. For instance, the utility  of a voice-over-IP (VoIP) can be represented as a step function whose utility is zero before a certain threshold rate and achieves $100\%$ for rates larger than the threshold. Another example 
is a video streaming application whose utility can be approximated as a sigmoidal function convex (concave) for rates below (above) its inflection point. As such, the methods presented in (\cite{kelly98ratecontrol,Low99optimizationflow}) incur the proceeding drawbacks: (a) Reaching optimal solutions for solely concave utility functions, they are inapplicable to the drastically escalating inelastic traffic volume of au courant networks; (b) Neither priority do they render to real-time applications with stringent QoS requirements, nor they reserve any attention for the application statuses, nor they look after subscribers' varied importance pivotal from a business standpoint.

Later, the authors in (\cite{Lee05non-convexoptimization,DL_PowerAllocation}) presented distributed rate allocation algorithms for multi-class service offerings based on concave and sigmoidal utility functions representing applications. Despite closely approximating optimal solution, involved methods dropped users to maximize the system utility, so they could not guarantee a minimal QoS. An effort by the authors in (\cite{AbdelhadiCNC2014, AbdelhadiPIMRC2013, AbdelhadiMobicom2013}) proposed a utility proportional fairness resource allocation, for users of a single-carrier communication network, cast as a convex problem with logarithmic and sigmoidal utility functions respectively modelling delay-tolerant and real-time applications. Although their schemes prioritized the real-time applications over the delay-tolerant ones, they neither contemplated the application status or user differentiation concepts, nor regarded the hybrid traffic prevailing in modern networks.

In \cite{RebeccaThesis}, the author considered a weighted aggregation of logarithmic and sigmoidal utilities approximated to the nearest concave utility function via a minimum mean-squared error measure inside UEs. The approximate utility function solved the rate allocation optimization through a variation of the conventional distributed resource allocation approach in \cite{kelly98ratecontrol} such that rate assignments essentially estimated optimal ones. However, the rate were only approximations and no consideration was given to user or application priorities. This work was extended by Shajaiah et. al. (\cite{ShajaiahICNC2014,ShajaiahMILCOM2013}) to allow for the application of the resource allocation in a multi-carrier network in public safety. The authors in (\cite{ShajaiahPIMRC2014,ShajaiahCCS2014}) considered a similar multicarrier optimal resource allocation aware of the subscriber priorities. However, no attention was rendered to the temporal changes in the application usage or UE quantities. In (\cite{DBLP:conf/globecom/TychogiorgosGL11}), the authors adopted a non-convex optimization formulation to maximize the system utility in wireless networks consisting of applications with logarithmic and sigmoidal utility functions. A distributed process was employed to obtain the rates under a zero duality gap; but, the algorithm did not converge for a positive duality gap leading to compounding a heuristic to ensure the network stability.

In other studies, the authors of \cite{DBLP:conf/qosip/Harks05} created a utility max-min fairness resource allocation for the hybrid traffic sharing a single path in a communications network. Similarly, \cite{UtilityFairness} presented a utility proportional fairness optimization for the high signal-to-interference-plus-noise ratio wireless networks using a utility max-min architecture, contrasted against the traditional proportional fairness algorithms \cite{utility_fair} and provided a closed-form solution that refrained from network oscillations. However, neither methods cared for any traffic or user priorities in assigning the spectrum. In (\cite{GhorbanzadehICNC2015,Erpek2015}), the authors developed a utility proportional fairness resource block allocation in wireless networks as an integer optimization problem. They initially obtained the continuous optimal rates and then took on a boundary mapping technique to extracted a pool of valid resource blocks tantamount to inferred optimal continuous rates, 
albeit neither hybrid traffic, nor application status, nor user importance was taken into the equation. In a similar work \cite{GhorbanzadehMILCOM2014}, the authors organized a utility proportional fairness optimization which allocated optimal UE rates in a cellular infrastructure coexistent with radars by leveraging the Lagrange multipliers. Finally, \cite{Tao2008} presented a subcarrier allocation in orthogonal frequency division multiplexed systems concentrating on delay constrained data and used network delay models \cite{GhorbanzadehICC2013} for the subcarrier assignment. And last but not the least, \cite{AbdelhadiarXiv2014_2} developed a location/time/context-aware source allocation in cellular networks; however, they did not consider the temporal changes in the application usage percentage, the number of UEs, or subscribers' priority.

\subsection{Contributions}\label{sec:contribution}
In brevity, contributions of the current manuscript proceed as such.
\begin{itemize}
\item We formulate resource allocation optimizations with centralized and distributed architectures for cellular communication systems subsuming smartphones generating a hybrid traffic of elastic and inelastic data flows respectively stemmed from concurrently running delay-tolerant and real-time applications applications mathematically modelled as logarithmic and sigmoidal utility functions in that order.
\item We prove that the proposed resource allocation optimization problems are convex, have tractable global optimal solutions (the rate assignments are optimal), render bandwidth assignment priorities to real-time applications due to their reliance on the utility proportional fairness framework, and eschew from dropping users hereby a minimum QoS is warranted.
\item We adopt a two-stage algorithm for the distributed approach to optimally assign rates for UEs externally and for running applications internally.
\item We derive a robust one-stage algorithm for the centralized scheme to optimally assign rates to the running applications externally to the UEs and prove that the mathematical equivalence of the two-stage and one-stage concocts.
\end{itemize}

\subsection{Organization}\label{sec:organization}
The remainder of this paper is organized as follows. Section \ref{sec:related} surveys the topical resource allocation literature in brevity. Section \ref{sec:Problem_formulation} presents the formulation for the centralized and distributed resource allocation optimizations problems. Section \ref{sec:global_optimal} proves the existence of global optimal solutions for the optimization problems devised in section \ref{sec:Problem_formulation}. Section \ref{sec:EURA_Dual_algorithm} puts forward solution algorithms for the optimization problems. Section \ref{sec:equivalence} proves the mathematical equivalence between the distributed and centralized algorithms. Section \ref{sec:robust_alg} illustrates the distributed robust rate allocation algorithm to refrain from rate fluctuations. Section \ref{sec:one_stage_alg} provides with a solution algorithm for the one-stage centralized rate allocation algorithm. Section \ref{sec:sim} discusses simulation setup and develops quantitative results along with their 
analysis for the implementation of the proposed resource assignment schemes. And, section \ref{sec:concl} concludes the paper.

\section{Problem Formulation}\label{sec:Problem_formulation}
The objective is to determine optimal rates that hybrid-traffic-carrying cellular communications systems should be allocating to their UE applications so as to dynamically ensure as such: 1) Real-time applications are rendered priority over delay-tolerant ones. 2) no user is dropped 3) Applications temporal usage is accounted for. 4) Subscription-based treatments is honored. We assume each UE contains multiple simultaneously running real-time and delay-tolerant applications, mathematically represented by sigmoidal and logarithmic utility functions as shown in section \ref{sec:utilities}.

\subsection{Applications Utility functions}\label{sec:utilities}
Utility function have been used in a wide variety of research works to model some representative characteristic of the system. For instance, \cite{AbdelhadiarXiv2014_1} leveraged utility functions to model the modulation schemes in a power allocation problem. In this paper, an application performance satisfaction as a function of its allocated rates is referred to as a utility function, denoted as $U(r)$ for the rate $r$, and have the following properties \cite{Shenker95fundamentaldesign}.

\begin{itemize}
\item $U(0) = 0$ and $U(r)$ is an increasing function $r$ .
\item $U(r)$ is twice differentiable in $r$ and bounded above.
\end{itemize}

The first statement of the former property implies the nonnegativity of the utility functions which is expected since they represent application performance satisfaction percentage, whereas its second statement reveals that the more assigned rate, the higher the application performance satisfaction. On the flip side, the latter property indicates the continuity of the utility functions. Hybrid traffic consists of elastic and inelastic traffic streams sprung from respectively delay-tolerant and real-time applications whose utilities are conductively modelled by correspondingly normalized logarithmic and sigmoidal utility functions in equations (\ref{eqn:sigmoid}) and (\ref{eqn:log}) in that order (\cite{DL_PowerAllocation}).

\begin{equation}\label{eqn:sigmoid}
    U(r) = c\Big(\frac{1}{1+e^{-a(r-b)}}-d\Big)
\end{equation}

Here, $c = \frac{1+e^{ab}}{e^{ab}}$ and $d = \frac{1}{1+e^{ab}}$. It can be easily verified that $U(0) = 0$ and $U(\infty) = 1$, where the former is one of the previously mentioned utility function properties and the latter indicates that an infinite resource assignment ensues $100\%$ satisfaction. Furthermore, it is easily derivable that the inflection point of equation (\ref{eqn:sigmoid}) occurs at $r = r^{\text{inf}} = b$, where the superscript "inf" stands for infliction.

\begin{equation}\label{eqn:log}
    U(r) = \frac{\log(1+kr)}{\log(1+kr^{\text{max}})}
\end{equation}

Here, $r^{\text{max}}$ is the maximum rate at which the application QoS is satisfied in full ($100\%$ utility percentage) and $k$ is the utility function increase with augmenting the allocated rate $r$. It can be easily checked that $U(0) = 0$ and $U(r^{\text{max}}) = 1$., where the former is again the basic property of the utility functions and the latter implies that a $100\%$ QoS satisfaction occurs at $r = r^{max}$. Moreover, the inflection point of normalized logarithmic function is at $r = r^{\text{inf}} = 0$. For the sake of illustration, the utility functions with the parameters according to Table \ref{table:parameters} are plotted in Figure \ref{fig:sim:app_utilities}, from which we can observe that the sigmoid utility functions gain a slight QoS satisfaction only after the allocated rates surpass the inflection points of the utilities whereas the logarithmic ones obtain some QoS fulfillment even for a minuscule assigned bandwidth. These behaviors make sigmoidal and logarithmic utility functions 
suitable for modeling real-time and delay-tolerant applications respectively, and the germane mathematical analyses appear in (\cite{DL_PowerAllocation,UtilityFairness}) in nuance.

Next, section \ref{sec:sys_model} concocts the system model for the rate allocation problem proposed in this article.

\subsection{System Model}\label{sec:sys_model}
To present the system, with no loss of generality, we concentrate on a cellular network's single cell, which subsumes an eNB covering $M$ UEs (here $M = 6$) depicted in Figure \ref{fig:system_model}, where each UE concurrently runs delay-tolerant and real-time applications represented respectively by the logarithmic and sigmoidal utility functions in section \ref{sec:utilities}. The rate assigned by the eNB to the $i^{th}$ UE is denoted as $r_{i}$ and the UE's aggregated utility function is shown as $V_i(r_{i})$, which we relate it to the UE application utilities accordingly to the equation (\ref{eqn:utility_agg}) below.

\begin{equation}\label{eqn:utility_agg}
V_i(r_{i}) = \prod_{j=1}^{N_i}U_{ij}^{\alpha_{ij}}(r_{ij})
\end{equation}

Here, $r_{ij}$, $U_{ij}(r_{ij})$, and $\alpha_{ij}$ respectively represent the rate allocation, application utility function, and application usage percentage of the $j^{th}$ application running on the $i^{th}$ UE. Hence, we can write $\sum_{j=1}^{N_i}{\alpha_{ij}} = 1$ and $r_i = \sum_{j=1}^{N_i}{r_{ij}}$, where, for the $i^{th}$ UE, $N_i$ is the number of coevally running applications and $r_i$ presents the bandwidth allotment by the eNB. The former of the afore-written equations states the fact that the addition of the $i^{th}$ UE's application usage percentages proves $100\%$ usage percentage, and the letter one implies that the the $i^{th}$ UE rate is the augmentation of all its $N_i$ applications resources assignments.

We resort to a centralized and a distributed approach, illustrated in sections \ref{sec:one_stage} and \ref{sec:two_stage} respectively, to disseminate resources to the applications of UEs with the aggregated utility as in equation (\ref{eqn:utility_agg}).

\begin{figure}[!htb]
\begin{center}
\includegraphics[width=3.5in]{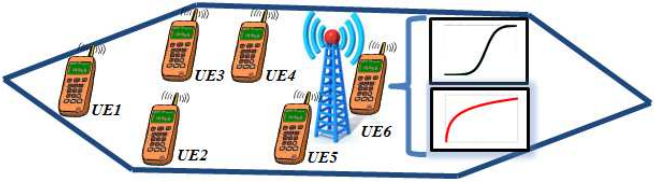}
\end{center}
\caption{\footnotesize{System Model: Single cell, within the cellular network, with an eNB covering $M = 6$ UEs each with simultaneously running delay-tolerant and relay-time applications represented by logarithmic and sigmoidal utility functions respectively.}}\label{fig:system_model}
\end{figure}

\subsection{Centralized Optimization}\label{sec:one_stage}
We develop a rate allocation optimization problem that assigns the application resources directly by the eNB in a singular stage. The basic germane formulation is illustrated in equation (\ref{eqn:opt_multiapp}).

\begin{equation}\label{eqn:opt_multiapp}
\begin{aligned}
& \underset{\textbf{r}}{\text{max}}
& & \prod_{i=1}^{M}\Big(\prod_{j=1}^{N_i}U_{ij}^{\alpha_{ij}}(r_{ij})\Big)^{\beta_{i}} \\
& \text{subject to}
& & \sum_{i=1}^{M}\sum_{j=1}^{N_i}r_{ij} \leq R,\\
& & &  r_{ij} \geq 0, \;\; i = 1,2, ...,M,\;\;j = 1,2,...,N_i\\
\end{aligned}
\end{equation}

Here, for $M$ UEs covered by an eNB, $\textbf{r} = [r_1,r_2,...,r_M]$ is the UE allocated rate vector, $R$ is the maximum available resources at the eNB, and $\beta_i$ is a subscription-dependent weight for the $i^{th}$ UE. Section \ref{sec:one_stage_global_optimal} proves the convexity and tractable optimal solvability of the aforementioned optimization problem, whose solution procedure is concocted as an algorithm in section \ref{sec:one_stage_alg}.

Next, section \ref{sec:two_stage} introduces a distributed approach to the resource allocation problem.

\subsection{Distributed Optimization}\label{sec:two_stage}
We subdivide the optimization problem (\ref{eqn:opt_multiapp}) into two simpler optimizations solved separately. The first optimization concerns with the UE rate allocation by the eNB via collaborations between the eNB and pertinent UEs hereby the optimization is referred to as external UE resource allocation (EURA). On the contrary, the second optimization wells up from distributing the application rates by the host UEs, performed internally to the UEs and is named the internal UE rate allocation (IURA). The solutions for the EURA and IURA optimization is laid as algorithms presented in section \ref{sec:EURA_Dual_algorithm}. The EURA formulation is explained below.

\subsubsection{EURA Optimization Problem}\label{sec:UE_alloc_opt}
EURA optimization, solved collaboratively amongst UEs and their eNB, can be written as equation (\ref{eqn:opt_sub1}), where $V_i(r_i) = \prod_{j=1}^{N_i}U_{ij}^{\alpha_{ij}}(r_{ij})$ is the $i^{th}$ UE aggregated utility function expressed in equation (\ref{eqn:opt_multiapp}), $\textbf{r} = [r_1,r_2,...,r_M]$ is the UE rate vector whose $i^{th}$ component represents the rate assigned by the eNB to the $i^{th}$ UE, and $M$ is the number of UEs covered by the eNB. In section \ref{sec:EURA_global_optimal}, we prove the convexity and tractable optimal solvability of the optimization problem in equation (\ref{eqn:opt_sub1}) exists and present the algorithm to solve this problem in section \ref{sec:global_optimal}.

\begin{equation}\label{eqn:opt_sub1}
\begin{aligned}
& \underset{\textbf{r}}{\text{max}}
& & \prod_{i=1}^{M}V_i^{\beta_{i}}(r_{i}) \\
& \text{subject to}
& & \sum_{i=1}^{M}r_{i} \leq R,\\
& & &  r_{i} \geq 0, \;\;\;\;\; i = 1,2, ...,M.
\end{aligned}
\end{equation}

\subsubsection{IURA Optimization Problem}\label{sec:app_alloc_opt}
IURA optimization problem, solved internally in each UE, can be written as equation (\ref{eqn:opt_sub2}) for the $i^{th}$ UE with $i \in \{1,2,...M\}$, where $\textbf{r}_i = [r_{i1},r_{i2},...,r_{iN_i}]$ is the application rate allocation vector such that its $j^{th}$ component indicates the bandwidth allotted by the $i^{th}$ UE to its $j^{th}$ application, $r_{i}^{\text{opt}}$ is the $i^{th}$ UE rate allocated by eNB via solving the EURA optimization in equation (\ref{eqn:opt_sub1}), and $N_i$ is the number of applications of the $i^{th}$ UE. Superscript "opt" indicates the optimality of the UE rates which will be proved in section \ref{sec:EURA_global_optimal}. Besides, section \ref{sec:IURA_global_optimal} proves that there exists a tractable global optimal solution to the IURA optimization problem in equation (\ref{eqn:opt_sub2}) and section \ref{sec:IURA_alg} provides the solving algorithm thereof.

\begin{equation}\label{eqn:opt_sub2}
\begin{aligned}
& \underset{\textbf{r}_i}{\text{max}}
& & \prod_{j=1}^{N_i}U_{ij}^{\alpha_{ij}}(r_{ij}) \\
& \text{subject to}
& & \sum_{j=1}^{N_i}r_{ij} \leq r_{i}^{\text{opt}},\\
& & & r_{ij} \geq 0, \;\;\;\;\; j = 1,2, ...,N_i.
\end{aligned}
\end{equation}

Next, section \ref{sec:global_optimal} proves the convexity of the EURA and IURA optimization problems.

\section{Existence of a Global Optimal Solution}\label{sec:global_optimal}
This section proves the existence of optimal solutions for the centralized and distributed resource allocations developed in sections \ref{sec:two_stage} and section \ref{sec:one_stage}, respectively.

\subsection{EURA Global Optimal Solution}\label{sec:EURA_global_optimal}
Strictly increasing nature of logarithms yields in an equivalent EURA objective function $\arg \underset{\textbf{r}} \max \sum_{i=1}^{M}\beta_{i} \log(V_i(r_{i}))$, stemmed from equation (\ref{eqn:opt_sub1}), reformulated and referred to as respectively equation \ref{eqn:opt_sub1_log} and log-EURA problem, for which the lemma \ref{lem:concavity} is conceivable.

\begin{equation}\label{eqn:opt_sub1_log}
\begin{aligned}
& \underset{\textbf{r}}{\text{max}}
& & \sum_{i=1}^{M}\beta_{i} \log(V_i(r_{i})) \\
& \text{subject to}
& & \sum_{i=1}^{M}r_{i} \leq R,\\
& & &  r_{i} \geq 0, \;\;\;\;\; i = 1,2, ...,M.
\end{aligned}
\end{equation}

\begin{lem}\label{lem:concavity}
The aggregated utility natural logarithm $\log(V_{i}(r_{i}))$ is strictly concave.
\end{lem}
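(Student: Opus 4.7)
My plan is to peel off the product structure and reduce the lemma to a one-variable statement about the two utility families. Since $\log V_i(r_i)=\sum_{j=1}^{N_i}\alpha_{ij}\log U_{ij}(r_{ij})$ with $\alpha_{ij}>0$, this sum depends on the $r_{ij}$'s only through distinct coordinates, so its Hessian in $\mathbf{r}_i=(r_{i1},\dots,r_{iN_i})$ is diagonal. If each $\log U_{ij}$ is strictly concave in its own argument, the Hessian is negative definite, which gives strict concavity in $\mathbf{r}_i$; strict concavity in $r_i$ after the IURA inner maximization over the simplex $\sum_j r_{ij}=r_i$ then follows from the standard value-function argument for strictly concave objectives under linear constraints. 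So the task reduces to verifying strict concavity of $\log U_{ij}$ for each of the two utility forms.

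The logarithmic case is routine. For $U_{ij}(r)=\log(1+kr)/\log(1+kr^{\max})$, a direct double differentiation yields
\[
\frac{d^2}{dr^2}\log U_{ij}(r)=-\frac{k^2\bigl(1+\log(1+kr)\bigr)}{\bigl((1+kr)\log(1+kr)\bigr)^2},
\]
which is strictly negative for every $r>0$ since the numerator inside the parentheses is positive. I would dispatch this in a single line and move on.

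The sigmoidal case is the real obstacle, because the shift by $-d$ in $U_{ij}(r)=c\bigl(\sigma(a(r-b))-d\bigr)$ destroys the textbook log-concavity of the pure logistic $\sigma$, so strict concavity must be re-proved by hand rather than cited. The trick I intend to use is the change of variable $s=\sigma(a(r-b))$, under which $U_{ij}=c(s-d)$ and $ds/dr=a\,s(1-s)$, giving
\[
\frac{d}{dr}\log U_{ij}=\frac{a\,s(1-s)}{s-d}.
\]
Differentiating the ratio $\phi(s)=s(1-s)/(s-d)$ in $s$ and completing the square in the numerator, I expect the computation to collapse to $\phi'(s)=-1-\tfrac{d(1-d)}{(s-d)^2}$, strictly negative because $0<d<1$. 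Multiplying by the strictly positive Jacobian factor $a^2\,s(1-s)$ then delivers $\tfrac{d^2}{dr^2}\log U_{ij}<0$ on the whole domain $r>0$ where $U_{ij}>0$ (which is exactly $s>d$, i.e.\ $r>0$, by the choice of $d$). The only delicate step is the algebraic rearrangement of the numerator as $-[(s-d)^2+d(1-d)]$; everything else is mechanical.

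Combining the two cases yields strict concavity of every summand $\alpha_{ij}\log U_{ij}(r_{ij})$, hence of $\log V_i$, which is the statement of the lemma. The same diagonal-Hessian computation also shows that concavity is preserved under the IURA partial maximization that turns $\log V_i(\mathbf{r}_i)$ into $\log V_i(r_i)$, so the lemma can be invoked directly in the log-EURA problem (\ref{eqn:opt_sub1_log}) without further work.
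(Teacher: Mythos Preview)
Your proof is correct and follows the same overall strategy as the paper: write $\log V_i=\sum_j\alpha_{ij}\log U_{ij}$ and verify strict concavity of each summand separately for the logarithmic and sigmoidal families. Where you depart from the paper is in the execution of the sigmoidal case and in the handling of the scalar/vector ambiguity.

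For the sigmoidal term, the paper simply differentiates $\log U_{ij}$ twice in the original variable $r_{ij}$ and records the resulting expressions (its equation for $\frac{d^2}{dr_{ij}^2}\log U_{ij}$ is a sum of two explicitly negative fractions in $a_{ij},b_{ij},c_{ij},d_{ij}$), relying on the previously established bound $0<1-d_{ij}(1+e^{-a_{ij}(r_{ij}-b_{ij})})$ to check signs. Your substitution $s=\sigma(a(r-b))$ compresses all of that into the single identity $\phi'(s)=-1-\dfrac{d(1-d)}{(s-d)^2}$, which is shorter and makes the strict negativity transparent; the paper's version is more mechanical but requires carrying four parameters through two derivatives.

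You also make explicit something the paper leaves tacit: the lemma is stated for $\log V_i(r_i)$ as a function of the scalar $r_i$, whereas the product formula defines it on the vector $\mathbf r_i$. Your diagonal-Hessian remark plus the value-function argument (strictly concave objective maximized over a moving affine slice yields a strictly concave envelope) fills this gap cleanly; the paper never addresses it and simply passes from concavity of each $\log U_{ij}(r_{ij})$ to concavity of $\log V_i(r_i)$ in one line.
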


\begin{proof}
From equation (\ref{eqn:utility_agg}), we can write $\log V_i(r_{i}) = \sum_{j=1}^{N_i}{\alpha_{ij}}\log U_{ij}(r_{ij})$ where $U_{ij}(r_{ij}) > 0$ in accordance with section \ref{sec:Problem_formulation} utility function properties. Also, logarithmic utilities (equation (\ref{eqn:log})) concavity stems out $U'_{ij}(r_{ij}) = \frac{dU_{ij}(r_{ij})}{dr_{ij}} > 0$ and $U''_i(r_{ij}) = \frac{d^2U_{ij}(r_{ij})}{dr_{ij}^2} < 0$, resulting in $\frac{d\log(U_{ij}(r_{ij}))}{dr_{ij}} =  \frac{U'_{ij}(r_{ij})}{U_{ij}(r_{ij})} > 0$ due to $U_{ij}(r_{ij}) > 0$ and $U'_{ij}(r_{ij}) > 0$ and in $\frac{d^2\log(U_{ij}(r_{ij}))}{dr_{ij}^2} =  \frac{U''_{ij}(r_{ij})U_{ij}(r_{ij})-U'^2_{ij}(r_{ij})}{U^2_{ij}(r_{ij})} < 0$ due to $U''_{ij}(r_{ij}) < 0$. Thus, the logarithmic utility natural logarithm is strictly concave. On the flip side, for a sigmoidal utility $U_{ij}(r_{ij})$ (equation (\ref{eqn:sigmoid})) with $0<r_{ij}<R$, we have the following inequalities amongst which the first owes to the sigmoidal function's 
continuity and $0 \leq U_{ij}(r_{ij}) < 1$ and the rest are utter algebraic manipulation of the first one.

\begin{equation*}\label{eqn:sigmoid_bound}
\begin{aligned}
0&<c_{ij}\Big(\frac{1}{1+e^{-a_{ij}(r_{ij}-b_{ij})}}-d_{ij}\Big)<1\\
d_{ij}&<\frac{1}{1+e^{-a_{ij}(r_{ij}-b_{ij})}}<\frac{1+c_{ij}d_{ij}}{c_{ij}}\\
\frac{1}{d_{ij}}&>{1+e^{-a_{ij}(r_{ij}-b_{ij})}}>\frac{c_{ij}}{1+c_{ij}d_{ij}}\\
0&<1-d_{ij}({1+e^{-a_{ij}(r_{ij}-b_{ij})}})<\frac{1}{1+c_{ij}d_{ij}}\\
\end{aligned}
\end{equation*}

For $0 < r_{ij} < R$, we have the following inequalities, of which the first results from first additive's denominator positivity in addition to the formerly derived statement $0 < 1 - d_{ij}(1 + e^{-a_{ij}(r_{ij} - b{ij})}) < \frac{1}{1 + c_{ij}d_{ij}}$ as well as other constituents' positivity and the last one is verifiable by investigating its terms algebraically. Hence, the sigmoidal utility natural logarithm is strictly concave. As such, the applications utility functions $U_{ij}(r_{ij}) > 0$ of the system model (equation (\ref{eqn:utility_agg})) have strictly concave natural logarithms, meaning that the aggregated utility $\log V_i(r_{i}) = \sum_{j=1}^{N_i}{\alpha_{ij}}\log U_{ij}(r_{ij})$ is strictly concave.

\begin{equation}\label{eqn:sigmoid_derivative}
\begin{aligned}
\frac{d}{dr_{ij}}\log U_{ij}(r_{ij}) =& \frac{a_{ij}d_{ij} e^{-a_{ij}(r_{ij}-b_{ij})}}{1-d_{ij}(1+e^{-a_{ij}(r_{ij}-b_{ij})})} \\
\;\;\;&  + \frac{a_{ij}e^{-a_{ij}(r_{ij}-b_{ij})}}{(1+e^{-a_{ij}(r_{ij}-b_{ij})})}>0\\
\frac{d^2}{dr_{ij}^2}\log U_{ij}(r_{ij}) =& \frac{-a_{ij}^2d_{ij}e^{-a_{ij}(r_{ij}-b_{ij})}}{c_{ij}\Big(1-d_{ij}(1+e^{-a(r_{ij}-b_{ij})})\Big)^2} \\
\;\;\;&  + \frac{-a_{ij}^2e^{-a_{ij}(r_{ij}-b_{ij})}}{(1+e^{-a_{ij}(r_{ij}-b_{ij})})^2} < 0\\
\end{aligned}
\end{equation}
\end{proof}

Next, theorem \ref{thm:EURA_global_soln} proves the EURA optimization convexity.

\begin{thm}\label{thm:EURA_global_soln}
The EURA optimization problem in equation (\ref{eqn:opt_sub1}) is convex and has a unique tractable global optimal solution.
\end{thm}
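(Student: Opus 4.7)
The plan is to exploit the equivalence between the original EURA problem (\ref{eqn:opt_sub1}) and its logarithmic reformulation (\ref{eqn:opt_sub1_log}), for which Lemma \ref{lem:concavity} has already done the heavy lifting. Since the natural logarithm is a strictly increasing function, the maximizer of (\ref{eqn:opt_sub1}) coincides with the maximizer of (\ref{eqn:opt_sub1_log}); hence it suffices to establish convexity and uniqueness for the log-EURA form. I would phrase convexity of the maximization program in the standard sense, namely that the objective is concave in $\mathbf{r}$ and the feasible set is convex.

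First I would handle the objective. By Lemma \ref{lem:concavity}, each term $\log V_i(r_i)$ is strictly concave in $r_i$, regarded as a function of the entire vector $\mathbf{r}$. Since the subscription weights $\beta_i$ are supplied by the network provider as strictly positive quantities (a hypothesis I would state explicitly), the sum $\sum_{i=1}^{M} \beta_i \log V_i(r_i)$ is a positive linear combination of strictly concave functions, and therefore strictly concave on the domain $\{\mathbf{r}: r_i > 0\}$. Next I would verify the constraint set: the half-space $\sum_{i=1}^M r_i \le R$ is affine, the orthant conditions $r_i \ge 0$ are affine, and their intersection is a nonempty compact convex polytope (the scaled simplex). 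So the log-EURA problem is the maximization of a strictly concave, continuous objective over a nonempty compact convex set.

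Two standard consequences then close the argument. Existence of an optimum follows from the extreme value theorem applied to the continuous objective on the compact feasible set (here one has to note that any optimum necessarily lies in the relatively interior where $r_i>0$ for the indices contributing the sigmoidal terms, since otherwise $\log V_i \to -\infty$; and any point with some coordinate at $0$ but $\sum r_i < R$ is dominated by shifting mass to that coordinate, so the maximum is attained in the region where the strict concavity is valid). Uniqueness of the optimum follows from strict concavity on a convex set in the usual way: if two distinct optima existed, any strict convex combination would give a strictly larger value, contradicting optimality. Tractability follows from the problem being a convex program with affine constraints that satisfy Slater's condition (any $\mathbf{r}$ with $r_i>0$ and $\sum r_i < R$ is strictly feasible), so strong duality holds and the KKT conditions are both necessary and sufficient, paving the way for the Lagrangian-based algorithm announced for section \ref{sec:EURA_Dual_algorithm}.

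The hard part will not be any single step, all of which are routine once Lemma \ref{lem:concavity} is in hand; rather, it is bookkeeping. I need to be careful that the strict concavity assertion is phrased correctly for a function of $\mathbf{r}$ (each $\log V_i$ depends on only one coordinate, so the Hessian of the sum is block-diagonal with negative-definite blocks, hence negative definite), and I must make the assumption $\beta_i>0$ explicit, since the theorem would fail if some weights were allowed to vanish. Finally, I would close by recording that the equivalent transformation by $\log$ preserves the unique maximizer, so the original program (\ref{eqn:opt_sub1}) inherits the same convexity-qualitative properties (after the logarithmic change) and has the same unique global optimal rate vector $\mathbf{r}^{\text{opt}}$.
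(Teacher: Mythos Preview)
Your proposal is correct and follows essentially the same route as the paper: invoke Lemma~\ref{lem:concavity} to get strict concavity of each $\log V_i$, pass to the equivalent log-EURA formulation (\ref{eqn:opt_sub1_log}), and conclude convexity plus a unique tractable optimum via standard convex-optimization facts. The paper's own proof is a two-line citation to \cite{Boyd2004}; your version simply fills in the details (positivity of the $\beta_i$, compact convex feasible set, Slater's condition, and the block-diagonal Hessian argument for strict concavity in $\mathbf{r}$) that the paper leaves implicit.
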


\begin{proof}
The aggregated utility concavity (lemma \ref{lem:concavity}) concludes the log-EURA (equation (\ref{eqn:opt_sub1_log})) convexity \cite{Boyd2004}, which in turn proves the EURA problem (equation (\ref{eqn:opt_sub1})) convexity due to their objective functions equivalence. There exists a unique tractable global optimal solution for a convex optimization \cite{Boyd2004}.
\end{proof}

Section \ref{sec:IURA_global_optimal} explores the IURA optimization convexity.

\subsection{IURA Global Optimal Solution}\label{sec:IURA_global_optimal}
The IURA objective function $\prod_{j=1}^{N_i}U_{ij}^{\alpha_{ij}}(r_{ij})$ corresponds to $\sum_{j=1}^{N_i}\alpha_{ij} \log(U_{ij}(r_{ij}))$, so equation (\ref{eqn:opt_sub2}) can be reformulated as equation \ref{eqn:opt_sub2_log}, referred to as the log-IURA problem for which corollary \ref{cor:iura_conv} is conceivable.

\begin{equation}\label{eqn:opt_sub2_log}
\begin{aligned}
& \underset{\textbf{r}_i}{\text{max}}
& & \sum_{j=1}^{N_i}\alpha_{ij} \log U_{ij}(r_{ij}) \\
& \text{subject to}
& & \sum_{i=1}^{N_i}r_{ij} \leq r_{i}^{\text{opt}},\\
& & & r_{ij} \geq 0, \;\;\;\;\; j = 1,2, ...,N_i.
\end{aligned}
\end{equation}

\begin{cor}\label{cor:iura_conv}
The IURA optimization problem in equation (\ref{eqn:opt_sub2}) is convex and has a unique tractable global optimal solution.
\end{cor}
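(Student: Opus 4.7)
My plan is to mirror the argument used for Theorem \ref{thm:EURA_global_soln} and to recycle as much of Lemma \ref{lem:concavity} as possible, since the IURA problem is structurally a single-UE slice of the EURA problem. First I would pass from equation (\ref{eqn:opt_sub2}) to equation (\ref{eqn:opt_sub2_log}) by noting that $\log(\cdot)$ is strictly increasing on the positive reals and that the original objective $\prod_{j=1}^{N_i} U_{ij}^{\alpha_{ij}}(r_{ij})$ takes positive values on the feasible interior, so $\mathbf{r}_i$ maximizes the product if and only if it maximizes $\sum_{j=1}^{N_i} \alpha_{ij} \log U_{ij}(r_{ij})$. This reduces the claim to showing that the log-IURA problem is a convex program with a strictly concave objective.

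The next step is the concavity of the objective. Here I would appeal directly to the calculations embedded inside the proof of Lemma \ref{lem:concavity}: that proof already establishes $\tfrac{d^2}{dr_{ij}^2}\log U_{ij}(r_{ij}) < 0$ for both the logarithmic form (\ref{eqn:log}) and the sigmoidal form (\ref{eqn:sigmoid}) on their feasible ranges, so each $\log U_{ij}(r_{ij})$ is strictly concave in its scalar argument $r_{ij}$. Since $\alpha_{ij} > 0$ and since a nonnegative weighted sum of strictly concave functions in disjoint variables is strictly concave in the joint variable $\mathbf{r}_i = (r_{i1},\dots,r_{iN_i})$, the log-IURA objective is strictly concave on $\mathbb{R}_{\geq 0}^{N_i}$.

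Finally I would observe that the feasible set $\{\mathbf{r}_i \geq 0 : \sum_j r_{ij} \leq r_i^{\text{opt}}\}$ is a compact convex polytope (the intersection of the nonnegative orthant with a single linear halfspace, which is nonempty since $r_i^{\text{opt}} \geq 0$ by Theorem \ref{thm:EURA_global_soln}). Maximizing a strictly concave, continuous function over a compact convex set is a convex optimization problem \cite{Boyd2004}, and strict concavity together with convex feasibility guarantees a unique global optimum; equivalence of the maximizers of log-IURA and IURA then transports this uniqueness back to equation (\ref{eqn:opt_sub2}), completing the corollary.

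I do not expect a real obstacle here, because the heavy lifting — strict concavity of $\log U_{ij}$ for both utility families — has already been carried out inside Lemma \ref{lem:concavity}; the only care needed is to be explicit that positive scalar weights preserve strict concavity and that the polytope constraint is convex, so the argument reduces to a clean invocation of the standard convex-programming uniqueness result rather than any new analysis of the sigmoidal terms.
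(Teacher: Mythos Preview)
Your proposal is correct and follows essentially the same route as the paper: invoke the per-application strict concavity of $\log U_{ij}$ established within Lemma~\ref{lem:concavity}, conclude concavity of the log-IURA objective, and then cite the standard convex-optimization result for uniqueness and tractability, passing back to equation~(\ref{eqn:opt_sub2}) by monotonicity of $\log$. Your version is simply more explicit about the auxiliary observations (weighted sums preserve concavity, the feasible simplex is compact and convex) that the paper leaves to the reader via its citation of \cite{Boyd2004}.
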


\begin{proof}
Substantiating lemma \ref{lem:concavity} was concomitant with proving the application utility natural logarithm concavity, which ascertains the convexity of the log-IURA (equation (\ref{eqn:opt_sub2_log})) \cite{Boyd2004}, yielding in the convexity of its equivalent IURA optimization (equation (\ref{eqn:opt_sub2})) and existence of a tractable global optimal solution \cite{Boyd2004}.
\end{proof}

Theorem \ref{thm:EURA_global_soln} and corollary \ref{cor:iura_conv} indicate that the distributed optimization in section \ref{sec:two_stage} assigns rates optimally. Next, section \ref{sec:one_stage_global_optimal} analyzes the centralized optimization convexity.

\subsection{Centralized Rate Allocation Global Optimal Solution}\label{sec:one_stage_global_optimal}
The centralized optimization objective function $\prod_{i=1}^{M}\Big(\prod_{j=1}^{N_i}U_{ij}^{\alpha_{ij}}(r_{ij})\Big)^{\beta_{i}}$ corresponds to $\sum_{i=1}^{M}{\beta_{i}}\sum_{j=1}^{N_i}{\alpha_{ij}}\log U_{ij}(r_{ij})$, reformulating equation (\ref{eqn:opt_multiapp}) as equation \ref{eqn:opt_multiapp_log}, referred to as the log-centralized problem, for which corollary \ref{cor:centr_multiapp_conv} is conceivable.

\begin{equation}\label{eqn:opt_multiapp_log}
\begin{aligned}
& \underset{\textbf{r}}{\text{max}}
& & \sum_{i=1}^{M}{\beta_{i}}\sum_{j=1}^{N_i}{\alpha_{ij}}\log U_{ij}(r_{ij}) \\
& \text{subject to}
& & \sum_{i=1}^{M}\sum_{j=1}^{N_i}r_{ij} \leq R,\\
& & &  r_{ij} \geq 0, \;\; i = 1,2, ...,M,\;j = 1,2,...,N_i\\
\end{aligned}
\end{equation}

\begin{cor}\label{cor:centr_multiapp_conv}
The centralized optimization problem in equation (\ref{eqn:opt_multiapp}) is convex and has a unique tractable global optimal solution.
\end{cor}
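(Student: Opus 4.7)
The plan is to reduce Corollary \ref{cor:centr_multiapp_conv} to Lemma \ref{lem:concavity} via exactly the same log-transform trick that powered Theorem \ref{thm:EURA_global_soln}. First I would observe that the natural logarithm is strictly monotonically increasing, so the argmax of the objective in equation (\ref{eqn:opt_multiapp}) coincides with the argmax of the log-transformed objective in equation (\ref{eqn:opt_multiapp_log}); in particular, the two problems share the same feasible region (the constraints do not involve the objective) and the same optimizers, and uniqueness of the optimum in one transfers to the other.

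Second, I would invoke Lemma \ref{lem:concavity}, whose proof in fact established the stronger per-application fact that $\log U_{ij}(r_{ij})$ is strictly concave in $r_{ij}$ for both the logarithmic form (\ref{eqn:log}) and the sigmoidal form (\ref{eqn:sigmoid}). Because the weights satisfy $\beta_i>0$ and $\alpha_{ij}>0$, each summand $\beta_i\alpha_{ij}\log U_{ij}(r_{ij})$ is strictly concave in its single variable $r_{ij}$, and the full objective
\begin{equation*}
F(\mathbf{r})=\sum_{i=1}^{M}\beta_i\sum_{j=1}^{N_i}\alpha_{ij}\log U_{ij}(r_{ij})
\end{equation*}
is a sum of strictly concave functions in disjoint coordinate blocks, hence strictly concave on $\mathbb{R}_{\ge 0}^{\sum_i N_i}$.

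Third, I would note that the constraint set
\begin{equation*}
\mathcal{C}=\Bigl\{\mathbf{r}:\ \sum_{i=1}^{M}\sum_{j=1}^{N_i}r_{ij}\le R,\ r_{ij}\ge 0\Bigr\}
\end{equation*}
is the intersection of a half-space with the nonnegative orthant, hence a nonempty, closed, bounded, convex polytope. Maximizing a strictly concave $F$ over a compact convex set is a convex program with a unique optimizer, and standard convex-optimization machinery (e.g., KKT together with Slater's condition, which holds trivially as long as $R>0$) makes the solution tractable, exactly as cited from \cite{Boyd2004} in the EURA case.

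I do not expect any real obstacle here, since the heavy lifting — strict concavity of $\log U_{ij}$ for the sigmoidal branch — was already discharged in Lemma \ref{lem:concavity}. The only care point is emphasizing that separate variables $r_{ij}$ make the additive objective strictly concave jointly (not merely coordinate-wise), which is immediate because a sum of strictly concave univariate functions of distinct coordinates is strictly concave on the product domain. With that, Corollary \ref{cor:centr_multiapp_conv} follows in essentially one line of composition.
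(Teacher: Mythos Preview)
Your proposal is correct and follows essentially the same route as the paper: reduce to the log-transformed problem (\ref{eqn:opt_multiapp_log}), invoke the per-application strict concavity of $\log U_{ij}$ established within Lemma~\ref{lem:concavity}, and conclude convexity and a unique tractable optimum by citing \cite{Boyd2004}. The paper's own proof is a one-sentence version of exactly this; your added remarks on separable strict concavity and compact convexity of the feasible set simply make explicit what the paper leaves implicit.
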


\begin{proof}
Substantiating lemma \ref{lem:concavity} was concomitant with proving the application utility natural logarithm concavity, which entails the the log-centralized optimization (equation (\ref{eqn:opt_multiapp_log})) convexity \cite{Boyd2004}, ensuing the convexity of its tantamount centralized optimization (equation (\ref{eqn:opt_multiapp})) and existence of a tractable global optimal solution \cite{Boyd2004}.
\end{proof}

Next, section \ref{sec:EURA_Dual_algorithm} solves the distributed optimization problem presented in section \ref{sec:two_stage}.

\section{EURA Algorithm and Drawback}\label{sec:EURA_Dual_algorithm}
Here, we deploy the distriduality for convex optimization problems to solving them efficiently, similar to \cite{kelly98ratecontrol,Low99optimizationflow}. What proceeds is such an application of the duality to EURA and IURA constituents of the distributed rate allocation problem as well as to the centralized resource assignment problem. We present the EURA algorithm in section \ref{sec:distributed-alg-subsec} below.

\subsection{EURA Algorithm}\label{sec:distributed-alg-subsec}
The log-EURA problem (\ref{eqn:opt_sub1_log}) can be solved by converting it to its dual problem, similar to \cite{kelly98ratecontrol,Low99optimizationflow}. We define the Lagrangian as equation (\ref{eqn:lagrangian}).

\begin{equation}\label{eqn:lagrangian}
\begin{aligned}
L(\textbf{r},p) = & \sum_{i=1}^{M}{\log(V_{i}(r_{i}))-p(\sum_{i=1}^{M}r_{i} + \sum_{i=1}^{M}z_{i} - R)}\\
                = &  \sum_{i=1}^{M}\Big({\log(V_{i}(r_{i}))-pr_{i}\Big)} + p(R-\sum_{i=1}^{M}z_{i})\\
                = &  \sum_{i=1}^{M}L_i(r_{i},p) + p(R-\sum_{i=1}^{M}z_{i})\\
\end{aligned}
\end{equation}

where $z_{i}\geq 0$ is the slack variable and $p$ is Lagrange multiplier or the shadow price (price per unit bandwidth for all the $M$ channels). Therefore, the $i^{th}$ UE bid for bandwidth can be written as $w_i = p r_{i}$, where $\sum_{i=1}^{M}w_i = p \sum_{i=1}^{M}r_{i}$. The first term in equation (\ref{eqn:lagrangian}) is separable in $r_{i}$, so we have $\underset{\textbf{r}}\max \sum_{i=1}^{M}({\log(V_{i}(r_{i}))-pr_{i})} = \sum_{i=1}^{M}\underset{{r_{i}}}\max({\log(V_{i}(r_{i}))-pr_{i})}$ and the dual problem objective function can be written as equation (\ref{eqn:dual_obj_fn}).

\begin{equation}\label{eqn:dual_obj_fn}
\begin{aligned}
D(p) = & \underset{{\textbf{r}}}\max \:L(\textbf{r},p) \\
= &\sum_{i=1}^{M}\underset{{r_{i}}}\max\Big({\log(V_{i}(r_{i}))-pr_{i}\Big)} + p(R-\sum_{i=1}^{M}z_{i})\\
= &\sum_{i=1}^{M}\underset{{r_{i}}}\max (L_i(r_{i},p)) + p(R - \sum_{i=1}^{M}z_{i})
\end{aligned}
\end{equation}

Thus, the dual problem is formulated as equation (\ref{eqn:dual_problem}).

\begin{equation}\label{eqn:dual_problem}
\begin{aligned}
& \underset{{p}}{\text{min}}
& & D(p) \\
& \text{subject to}
& & p \geq 0.
\end{aligned}
\end{equation}

Leveraging the method of Lagrange multiplier, we have:

\begin{equation}\label{eqn:dual_max}
\frac{\partial D(p)}{\partial p} =  R-\sum_{i=1}^{M}r_{i} - \sum_{i=1}^{M}z_{i} = 0
\end{equation}

Substituting by $\sum_{i=1}^{M}w_i = p \sum_{i=1}^{M}r_{i}$, we have equation (\ref{eqn:dual_new_obj}), minimized to $p = \frac{\sum_{i=1}^{M}w_i}{R}$ at $\sum_{i=1}^{M}z_{i} = 0$ where $w_i = p r_{i}$ is transmitted by the $i^{th}$ UE to the eNB.

\begin{equation}\label{eqn:dual_new_obj}
p = \frac{\sum_{i=1}^{M}w_i}{R-\sum_{i=1}^{M}z_{i}}
\end{equation}

As such, we divide the log-EURA problem (\ref{eqn:opt_sub1_log}) into simpler optimizations at the eNB (eNB EURA problem) and UEs (UE EURA problem), respectively equations (\ref{eqn:opt_prob_fairness_eNodeB}) and (\ref{eqn:opt_prob_fairness_UE}) whose solutions, guaranteeing the utility proportional fairness in equation (\ref{eqn:opt_sub1}), are summarized in Algorithms \ref{alg:eNodeB_distributed} and \ref{alg:UE_distributed} in that order.

\begin{equation}\label{eqn:opt_prob_fairness_UE}
\begin{aligned}
& \underset{{r_{i}}}{\text{max}}
& & \log V_{i}(r_{i}) - p r_{i} \\
& \text{subject to}
& & p \geq 0\\
& & &  r_{i} \geq 0, \;\;\;\;\; i = 1,2, ...,M.
\end{aligned}
\end{equation}

During the execution of the aforesaid algorithms, starting with $w_i(0) = 0$, the $i^{th}$ UE, transmits an initial bid $w_i(1)$ to the eNB, which in turn subtracts the latterly received bid $w_i(n)$ and the formerly received one $w_i(n-1)$ and ceases the procedure if the difference is less than a threshold $\delta$; Otherwise, it computes and sends a shadow price $p(n) = \frac{\sum_{i=1}^{M}w_i(n)}{R}$ to is covered UEs. The $i^{th}$ UE extracts its rate $r_i(n)$ from the received $p(n)$ such that $\log V_i(r_i) - p(n)r_i$ is maximized. The rate $r_i(n)$ is employed to estimate the new bid $w_i(n)=p(n) r_{i}(n)$, transmitted to the eNB. This routine repeats until the bid difference $|w_i(n) - w_i(n-1)|$ falls below the threshold $\delta$.

\begin{equation}\label{eqn:opt_prob_fairness_eNodeB}
\begin{aligned}
& \underset{p}{\text{min}}
& & D(p) \\
& \text{subject to}
& & p \geq 0.\\
\end{aligned}
\end{equation}

The solution $r_i(n)$ of the $i^{th}$ UE EURA optimization $r_{i}(n) = \arg \underset{r_i}\max \Big(\log V_i(r_i) - p(n)r_i\Big)$ in Algorithm \ref{alg:UE_distributed} essentially solves the equation $\frac{\partial \log V_i(r_i)}{\partial r_i} = p(n)$, algebraically the Lagrange multiplier solution for equation (\ref{eqn:opt_prob_fairness_UE}) and geometrically the intersection point of the horizontal line $y = p(n)$ with the curve $y = \frac{\partial \log V_i(r_i)}{\partial r_i}$.

\begin{algorithm}
\caption{UE EURA Optimization}\label{alg:UE_distributed}
\begin{algorithmic}
\STATE {Send initial bid $w_i(1)$ to eNB.}
\LOOP
	\STATE {Receive shadow price $p(n)$ from eNB.}
	\IF {STOP from eNB} %
		\STATE {Calculate allocated rate $r_i ^{\text{opt}}=\frac{w_i(n)}{p(n)}$.}
		\STATE {STOP}
	\ELSE
		\STATE {Solve $r_{i}(n) = \arg \underset{r_i}\max \Big(\log V_i(r_i) - p(n)r_i\Big)$.}
		\STATE {Send new bid $w_i (n)= p(n) r_{i}(n)$ to eNB.}
	\ENDIF
\ENDLOOP
\end{algorithmic}
\end{algorithm}

\begin{algorithm}
\caption{eNB EURA Optimization}\label{alg:eNodeB_distributed}
\begin{algorithmic}
\LOOP
	\STATE {Receive bids $w_i(n)$ from UEs.}
	\COMMENT{Let $w_i(0) = 1\:\:\forall i$}
	\IF {$|w_i(n) -w_i(n-1)|<\delta  \:\:\forall i$} %
   		\STATE {Allocate rates, $r_{i}^{\text{opt}}=\frac{w_i(n)}{p(n)}$ to user $i$.}
   		\STATE {STOP}
	\ELSE
		\STATE {Calculate $p(n) = \frac{\sum_{i=1}^{M}w_i(n)}{R}$.}
		\STATE {Send new shadow price $p(n)$ to all UEs.}
	\ENDIF
\ENDLOOP
\end{algorithmic}
\end{algorithm}

A convergence analysis of the EURA algorithms and its resultant snags are discussed in section \ref{sec:conv_analy}.

\subsection{EURA Convergence Analysis}\label{sec:conv_analy}
To commence analyzing the EURA Algorithms \ref{alg:UE_distributed} and \ref{alg:eNodeB_distributed}, lemma \ref{lem:slope_curve} is envisaged.

\begin{lem}\label{lem:slope_curve}
The aggregated utility function $V_{i}(r_{i})$,  the slope curvature function $\frac{\partial \log V_{i}(r_{i})}{\partial r_{i}}$ has inflection points at $r_{i} = r_{ij}^{s} \approx r_{ij}^{\text{inf}}$ for $j^{th}$ application utility function $U_{ij}$ and is convex for $r_{ij} > \underset{j}\max \:r_{ij}^{s}$.
\end{lem}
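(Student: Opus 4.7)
The approach is to exploit the additive decomposition that Lemma~\ref{lem:concavity} already gives for $\log V_i$ and then reduce the inflection analysis of the slope curve to separate third-derivative analyses of the constituent $\log U_{ij}$. Since $\log V_i(r_{i}) = \sum_{j=1}^{N_i}\alpha_{ij}\log U_{ij}(r_{ij})$, the slope function $s_i(r_{i}) = \partial\log V_i(r_{i})/\partial r_{i}$ is a nonnegatively weighted sum of the individual slopes $(\log U_{ij})'$, and hence its inflection points (zeros of $s_i''$) and its convex regions (where $s_i'' \geq 0$) are driven by the signs of $\sum_{j}\alpha_{ij}(\log U_{ij})'''(r_{ij})$. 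So the task reduces to locating sign changes of the third derivative of $\log U_{ij}$ for each application type.

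For the logarithmic utility in (\ref{eqn:log}), I would differentiate $(\log U_{ij})'=k_{ij}/\bigl((1+k_{ij}r_{ij})\log(1+k_{ij}r_{ij})\bigr)$ twice more and show algebraically that $(\log U_{ij})'''(r_{ij})\geq 0$ for every $r_{ij}>0$, so the logarithmic contribution to $s_i$ is already convex on the entire positive axis, degenerating the ``inflection'' to the boundary point $r_{ij}^{s}=r_{ij}^{\mathrm{inf}}=0$ asserted by (\ref{eqn:log}). For the sigmoidal utility in (\ref{eqn:sigmoid}), I would differentiate the two closed-form expressions in (\ref{eqn:sigmoid_derivative}) once more, collecting the result as a rational function in the variable $x = e^{-a_{ij}(r_{ij}-b_{ij})}$ whose numerator is a low-degree polynomial in $x$; isolating the unique positive root $x^{\star}$ of that polynomial and solving $x^{\star} = e^{-a_{ij}(r_{ij}^{s}-b_{ij})}$ for $r_{ij}^{s}$ gives the sigmoidal inflection abscissa, and a first-order Taylor expansion about $x=1$ (i.e.\ about $r_{ij}=b_{ij}$) confirms $r_{ij}^{s}\approx b_{ij}=r_{ij}^{\mathrm{inf}}$.

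Finally, to establish the convexity claim past $\max_j r_{ij}^{s}$, I would observe that the sign analyses above also yield $(\log U_{ij})'''(r_{ij})>0$ for all $r_{ij}>r_{ij}^{s}$ (the logarithmic contribution is always positive, and the sigmoidal third derivative changes from negative to positive at $r_{ij}^{s}$ and remains positive thereafter, since past the inflection $U_{ij}$ approaches its asymptote monotonically from below). Once $r_{i}$ is large enough that every $r_{ij}$ exceeds its own $r_{ij}^{s}$, every term in $\sum_{j}\alpha_{ij}(\log U_{ij})'''(r_{ij})$ is strictly positive, so $s_i''>0$ and $s_i$ is strictly convex on that range.

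The main obstacle I anticipate is the sigmoidal case: the numerator polynomial in $x$ produced by one further differentiation of (\ref{eqn:sigmoid_derivative}) does not factor into a clean form that pins $r_{ij}^{s}$ at $b_{ij}$ exactly, so the identification $r_{ij}^{s}\approx r_{ij}^{\mathrm{inf}}$ is necessarily approximate and must be justified either by Taylor expansion about $b_{ij}$ or by a monotonicity/continuity argument showing that the unique sign change of $(\log U_{ij})'''$ occurs in an $O(1/a_{ij})$ neighborhood of $b_{ij}$; this is precisely why the lemma states $r_{ij}^{s}\approx r_{ij}^{\mathrm{inf}}$ rather than an equality.
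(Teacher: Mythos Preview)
Your proposal follows essentially the same route as the paper: decompose $S_i=\sum_j\alpha_{ij}S_{ij}$ via (\ref{eqn:sum_slope}), compute the second derivative of $S_i$ term by term (the paper writes these summands out explicitly as $S_i^1,S_i^2,S_i^3$ in (\ref{eqn:second_diff_slope})--(\ref{eqn:slope_fn_terms})), and locate the sigmoidal inflection near $b_{ij}$. The only real difference is in how $r_{ij}^s\approx b_{ij}$ is justified: you propose a Taylor expansion in $x=e^{-a_{ij}(r_{ij}-b_{ij})}$ about $x=1$, whereas the paper argues via the limiting and sign properties in (\ref{eqn:slope_fn_prop}) under the assumption $b_{ij}\gg 1/a_{ij}$. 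One small inaccuracy worth flagging: the paper's limit $\lim_{r_i\to 0}S_i^1=+\infty$ shows $S_i$ is also convex near the origin, so the sigmoidal contribution to $S_i''$ actually runs positive$\to$negative$\to$positive (convex$\to$concave$\to$convex) rather than the single negative-to-positive transition you describe; this does not affect the final convexity claim past $\max_j r_{ij}^s$, but your numerator polynomial in $x$ will exhibit two sign changes, not one.
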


\begin{proof}
For the $i^{th}$ UE aggregated utility $V_{i}(r_{i})$, let $S_{i}(r_{i}) = \frac{\partial \log V_{i}(r_{i})}{\partial r_{i}}$ be the aggregated utility slope curvature function, $S_{ij}(r_{ij}) = \frac{\partial \log U_{ij}(r_{ij})}{\partial r_{ij}}$ be the $j^{th}$ application utility slope curvature function, and $N^{S}_i$ be the number of sigmoidal utilities. Taking the logarithm and derivative of both sides of the equation (\ref{eqn:utility_agg}) yields in equation (\ref{eqn:sum_slope}).

\begin{equation}
\begin{aligned}\label{eqn:sum_slope}
S_{i}(r_{i}) &= \frac{\partial \log V_{i}(r_{i})}{\partial r_{i}} = \frac{\partial}{\partial r_i}\sum_{j=1}^{N_i}{\alpha_{ij}}\log U_{ij}(r_{ij})\\
	      &= \sum_{j=1}^{N_i}{\alpha_{ij}}\frac{\partial \log U_{ij}(r_{ij})}{\partial r_{ij}}= \sum_{j=1}^{N_i}\alpha_{ij}S_{ij}(r_{ij})\\
	      &= \sum_{j=1}^{N^{S}_i}\alpha_{ij}S_{ij}(r_{ij})+\sum_{j=N^{S}_i+1}^{N_i}\alpha_{ij}S_{ij}(r_{ij})
\end{aligned}
\end{equation}

Taking the $1^{th}$ and $2^{nd}$ derivatives of equation (\ref{eqn:sum_slope}), we write:

\begin{equation}
\begin{aligned}\label{eqn:diff_slope}
\frac{\partial S_{i}}{\partial r_{i}} &= \sum_{j=1}^{N^{S}_i}\{\frac{-{\alpha_{ij}}a_{ij}^2d_{ij}e^{-a_{ij}(r_{ij}-b_{ij})}}{c_{ij}\Big(1-d_{ij}(1+e^{-a_{ij}(r_{ij}-b_{ij})})\Big)^2} \\
& + \frac{{\alpha_{ij}}a_{ij}^2e^{-a_{ij}(r_{ij}-b_{ij})}}{\Big(1+e^{-a_{ij}(r_{ij}-b_{ij})}\Big)^2}\} \\
& - \sum_{j=N^{S}_i+1}^{N_i}\{\frac{\alpha_{ij}k_{ij}^{2}}{(1+k_{ij}r^{max})\log(1+k_{ij}r_{ij})^{2}}\}\\
\end{aligned}
\end{equation}

\begin{equation}
\begin{aligned}\label{eqn:second_diff_slope}
\frac{\partial^2 S_{i}}{\partial r_{i}^2} &= \sum_{j=1}^{N^{S}_i}\{\frac{d_{ij}e^{-a_{ij}(r_{ij}-b_{ij})}(1-d_{ij}(1-e^{-a_{ij}(r_{ij}-b_{ij})}))}{c_{ij}\Big(1-d_{ij}(1+e^{-a_{ij}(r_{ij}-b_{ij})})\Big)^3} \\
& + \frac{e^{-a_{ij}(r_{ij}-b_{ij})}(1-e^{-a_{ij}(r_{ij}-b_{ij})})}{\Big(1+e^{-a_{ij}(r_{ij}-b_{ij})}\Big)^3}\}\times a_{ij}^3\alpha_{ij}\\
& - \sum_{j=N^{S}_i+1}^{N_i}\{\frac{\alpha_{ij}k^{2}_{ij}(\log(1+k_{ij}r_{ij})-1)}{(1+k_{ij}r_{ij})^{2}\log^{2}(1+k_{ij}r_{ij})}\}\\
\end{aligned}
\end{equation}

It is easy to show that $\forall\: r_{i},\:\:\frac{\partial S_{i}}{\partial r_{i}}<0$. Denoting the $1^{th}$ term of equation (\ref{eqn:diff_slope}) and $2^{nd}$ and $3^{rd}$ terms of equation (\ref{eqn:second_diff_slope}) as respectively $S^1_i$, $S^2_i$, and $S^3_i$ stems out equation set (\ref{eqn:slope_fn_terms}), for which properties in equation set (\ref{eqn:slope_fn_prop}) are considerable.

\begin{equation}\label{eqn:slope_fn_terms}
\left\{
\begin{array}{l l}
   S^1_i = \frac{\alpha_{ij}a_{ij}^3e^{a_{ij}b_{ij}}(e^{a_{ij}b_{ij}}+e^{-a_{ij}(r_{ij}-b_{ij})})}{(e^{a_{ij}b_{ij}}-e^{-a_{ij}(r_{ij}-b_{ij})})^3}\\
   S^2_i = \frac{a_{ij}^3\alpha_{ij}e^{-a_{ij}(r_{ij}-b_{ij})}(1-e^{-a_{ij}(r_{ij}-b_{ij})})}{\Big(1+e^{-a_{ij}(r_{ij}-b_{ij})}\Big)^3}\\
   S^3_i = \frac{\alpha_{ij}k^{2}_{ij}(\log(1+k_{ij}r_{ij})-1)}{(1+k_{ij}r_{ij})^{2}\log^{2}(1+k_{ij}r_{ij})}
\end{array} \right.
\end{equation}

From equation set (\ref{eqn:slope_fn_prop}), we observe that the slope curvature function $S_{i}$ has the inflection point $r_{i} = r_{ij}^{s} \approx b_{ij} = r_{ij}^{\text{inf}}$ and changes from a convex function close to the origin to a concave function before the inflection point at $r_{ij} = r_{ij}^{s}$ to a convex function after the inflection point.

\begin{equation}\label{eqn:slope_fn_prop}
\left\{
\begin{array}{l l}
   \lim_{r_{i} \rightarrow 0} S^1_i = \infty,\\
   \lim_{r_{i} \rightarrow b_{ij}} S^  1_i = 0 \:\:\text{for} \:\:b_{ij} 	\gg \frac{1}{a_{ij}}\:\:\forall \:\:j \\
   S^2_i (b_{ij}) = 0\\
   S^2_i (r_{ij}>b_{ij}) > 0\\
   S^2_i (r_{ij}<b_{ij}) < 0\\
   S^3_i (r_{ij}>0) > 0
\end{array} \right.
\end{equation}
\end{proof}

\begin{cor}\label{cor:sig_convergence}
If $\sum_{i=1}^{M}\underset{j}\max \:\:r_{ij}^{s} \ll R$, then Algorithms \ref{alg:UE_distributed} and \ref{alg:eNodeB_distributed} converge to the global optimal rates corresponding to the steady state shadow price $p_{ss}< \frac{a_{i_{\max}} d_{i_{\max}} }{1-d_{i_{\max}} }+\frac{a_{i_{\max} }}{2}$, where $i_{\max} = \arg \underset{i}\max\:\: r_{ij_{\max}}^{s}$ and $r_{ij_{\max}}^{s} = \underset{j}\max \:\:r_{ij}^{s}$.
\end{cor}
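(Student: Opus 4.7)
The plan is to leverage Lemma \ref{lem:slope_curve} to trap the optimum on the monotone, convex branch of each slope curvature function $S_i$, and then to reduce the coupled UE--eNB iteration to a standard contractive dual ascent on that branch. First I would invoke Theorem \ref{thm:EURA_global_soln} to assert that (\ref{eqn:opt_sub1_log}) admits a unique optimum, together with KKT to argue that the capacity constraint is active, so $\sum_{i} r_i^{\text{opt}} = R$. The hypothesis $\sum_i \max_j r_{ij}^s \ll R$ then leaves a strictly positive headroom that must be distributed after every UE has crossed its largest inflection point; since utility-proportional fairness rewards investing capacity into the still-steepest $S_i$, I would conclude $r_i^{\text{opt}} > \max_j r_{ij}^s$ for every $i$. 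By Lemma \ref{lem:slope_curve}, $S_i$ is then strictly decreasing and convex, so the stationarity condition $S_i(r_i) = p$ defining the UE subproblem (\ref{eqn:opt_prob_fairness_UE}) yields a single-valued, continuous, strictly decreasing best-response function $r_i(p)$.

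The second step is convergence of Algorithms \ref{alg:UE_distributed}--\ref{alg:eNodeB_distributed}. Because each $r_i(p)$ is continuous and strictly monotone in $p$, the scalar map $\Phi(p) = \tfrac{1}{R}\sum_i p\,r_i(p)$ driving the eNB update is continuous, and convexity of $S_i$ at the fixed point lets me bound $|\Phi'(p_{ss})| < 1$. I would then appeal to the standard dual-decomposition contraction argument of \cite{kelly98ratecontrol,Low99optimizationflow} to conclude that $p(n) \to p_{ss}$ with $\sum_i r_i(p_{ss}) = R$, so that $w_i(n) \to p_{ss}\, r_i^{\text{opt}}$, the stopping criterion $|w_i(n) - w_i(n-1)| < \delta$ fires in finitely many iterations, and the recovered $w_i(n)/p(n)$ equals the global optimum guaranteed by Theorem \ref{thm:EURA_global_soln}.

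For the shadow-price bound I would evaluate $S_{i_{\max}}$ at the threshold rate. At convergence $p_{ss} = S_i(r_i^{\text{opt}})$ for every $i$, and because $r_{i_{\max}}^{\text{opt}} > r_{i_{\max} j_{\max}}^s \approx b_{i_{\max} j_{\max}}$ on the strictly decreasing branch, $p_{ss} < S_{i_{\max}}(r_{i_{\max} j_{\max}}^s)$. Substituting $r = b$ into (\ref{eqn:sigmoid_derivative}) so that $e^{-a(r-b)} = 1$, isolating the dominant $j_{\max}$-th sigmoidal contribution, and applying the elementary estimate $\tfrac{d}{1-2d} \le \tfrac{d}{1-d}$ valid on the operating range of $d_{ij}$ delivers the stated inequality.

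The main obstacle I foresee is step two. Sigmoidal slopes are not globally monotone, so in principle the iterated map $p \mapsto \sum_i r_i(p)$ could become set-valued or oscillatory whenever $p(n)$ drifts into small values that reactivate the pre-inflection (concave) branch of some $S_i$. The hypothesis $\sum_i \max_j r_{ij}^s \ll R$ is precisely the safety margin that prevents the eNB from ever quoting such a small price, but making this rigorous requires an explicit, iteration-invariant lower bound on $p(n)$; I anticipate establishing this by induction, using the initial bids $w_i(1)$ together with the fact that each eNB update preserves the capacity headroom and hence cannot collapse $p(n)$ below the threshold at which the post-inflection branch ceases to be the unique selector of $r_i(p)$.
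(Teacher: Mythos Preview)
Your overall strategy matches the paper's: use Lemma \ref{lem:slope_curve} to place every $r_i^{\text{opt}}$ on the convex tail of $S_i$, invoke the Kelly--Low dual-ascent analysis for convergence in that region, and bound $p_{ss}$ by evaluating the slope curvature at the largest inflection point. The paper's own argument is in fact terser than yours: it simply asserts that $\sum_i \max_j r_{ij}^s \ll R$ forces $r_{ij}>b_{ij}$ for all users and then appeals directly to \cite{kelly98ratecontrol,Low99optimizationflow} for convergence once $S_i$ is convex, without the KKT, best-response, or contraction bookkeeping you set up. Your worry about an iteration-invariant lower bound on $p(n)$ is a genuine refinement; the paper takes for granted that the iterates never leave the post-inflection branch.

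One concrete slip in the shadow-price step: the ``elementary estimate'' $\tfrac{d}{1-2d}\le \tfrac{d}{1-d}$ is false for $0<d<\tfrac12$, since $1-2d<1-d$ makes the left side larger. Direct substitution of $r=b$ into (\ref{eqn:sigmoid_derivative}) yields $\tfrac{ad}{1-2d}+\tfrac{a}{2}$, not $\tfrac{ad}{1-d}+\tfrac{a}{2}$; the paper simply asserts the latter as an equality, so the discrepancy appears to originate there rather than in your reasoning. Because $d_{ij}=(1+e^{a_{ij}b_{ij}})^{-1}$ is exponentially small in the regime of interest, the two expressions differ only at order $d^2$ and the bound is harmless in practice, but you should not justify it with an inequality pointing the wrong way.
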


\begin{proof}
An essential step to reach the optimal solution in Algorithm (\ref{alg:UE_distributed}) is solving $r_{i}(n) = \arg \underset{r_{i}}\max \Big(\log V_{i}(r_{i}) - p(n)r_{i}\Big)$ using the Lagrange multipliers in equation (\ref{eqn:slope_equation}).

\begin{equation}\label{eqn:slope_equation}
\frac{\partial \log V_{i}(r_{i})}{\partial r_{i}} - p =  S_{i}(r_{i}) - p = 0.
\end{equation}

Furthermore, equation set (\ref{eqn:slope_fn_prop}) indicate that the slope curvature function $S_{i}(r_{i})$ is convex for $r_{i}  > \underset{j} \max r_{ij}^s \approx \underset{j} \max b_{ij}$. Similar to the analyses in \cite{kelly98ratecontrol,Low99optimizationflow}, the Algorithms \ref{alg:UE_distributed} and \ref{alg:eNodeB_distributed} are guaranteed to converge to the global optimal solution when the aggregated slope curvature function $S_{i}(r_{i})$ is in the convex region. Hence, the aggregated utility natural logarithm converges to the global optimal solution for $r_{i}  > \underset{j} \max r_{ij}^s \approx \underset{j} \max b_{ij}$. On the other hand, the sigmoidal utility function inflection point is at $r_{i}^{\text{inf}} = b_{ij}$. For $\sum_{i=1}^{M}\underset{j}\max \:\:r_{ij}^{s} \ll R$, Algorithms \ref{alg:UE_distributed} and \ref{alg:eNodeB_distributed} allocate rates $r_{ij}>b_{ij}$ for all users and since $S_{ij}(r_{ij})$ is convex for $r_{ij}>r_{ij}^s \approx b_{ij}$, the optimal 
solution can be achieved by the algorithms. Equation (\ref{eqn:slope_equation}) and convexity of $S_{ij}(r_{ij})$ for $r_{ij}  > r_{ij}^s \approx b_{ij}$ imply that $p_{ss}< S_{ij}(r_{ij} =\max b_{ij})$, where $S_{ij}(r_{ij} =\max b_{ij}) = \frac{a_{i_{\max}} d_{i_{\max}} }{1-d_{i_{\max}} }+\frac{a_{i_{\max} }}{2}$ and $i_{\max} = \arg \max_i b_{ij}$.
\end{proof}

\begin{cor}\label{cor:sig_fluctuate}
For $\sum_{i=1}^{M}\underset{j}\max \:\:r_{ij}^{s}>R$ and the global optimal shadow price $p_{ss} \approx \frac{a_{ij}d_{ij} e^{\frac{a_{ij}b_{ij}}{2}}}{1-d_{ij}(1+e^{\frac{a_{ij}b_{ij}}{2}})} + \frac{a_{ij}e^{\frac{a_{ij}b_{ij}}{2}}}{(1+e^{\frac{a_{ij}b_{ij}}{2}})}$, the solution by EURA Algorithms \ref{alg:UE_distributed} and \ref{alg:eNodeB_distributed} fluctuates about the global optimal solution.
\end{cor}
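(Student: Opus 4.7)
The plan is to leverage the structure of the slope curvature function $S_i(r_i)=\partial\log V_i(r_i)/\partial r_i$ established in Lemma~\ref{lem:slope_curve}. In the regime $\sum_i \max_j r_{ij}^s > R$, the eNB simply cannot push every UE's rate above its sigmoidal inflection point, so at least one UE must be operating at $r_{ij} < r_{ij}^s \approx b_{ij}$, where $S_i$ is concave rather than convex. The convergence argument of Corollary~\ref{cor:sig_convergence}, borrowed from \cite{kelly98ratecontrol,Low99optimizationflow}, crucially required $S_i$ to be convex along the iterates, so it breaks down here. My plan is to show (i) that the claimed closed form for $p_{ss}$ is exactly the stationary Lagrange price in this scarce-resource regime, and then (ii) that the bid/price iteration fails to contract near that price and instead oscillates around it.

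For (i), I will evaluate the Lagrange condition $S_i(r_i)=p$ at the representative rate $r_{ij} \approx b_{ij}/2$, which is the natural midpoint for a UE whose share cannot reach $b_{ij}$ but is still comparable to it. Substituting $r_{ij}=b_{ij}/2$ into the closed form of $S_{ij}$ given by equation (\ref{eqn:sigmoid_derivative}) yields $e^{-a_{ij}(r_{ij}-b_{ij})}=e^{a_{ij}b_{ij}/2}$, and the two additive pieces collapse to exactly
\[
p_{ss} \;\approx\; \frac{a_{ij}d_{ij}\,e^{a_{ij}b_{ij}/2}}{1-d_{ij}(1+e^{a_{ij}b_{ij}/2})} + \frac{a_{ij}\,e^{a_{ij}b_{ij}/2}}{1+e^{a_{ij}b_{ij}/2}},
\]
matching the statement. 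A brief KKT argument on (\ref{eqn:opt_sub1_log}) — using $\sum_i r_i^{\text{opt}}=R$ at the optimum by strong duality combined with the hypothesis $\sum_i \max_j r_{ij}^s > R$ — pins the optimal $p$ into this range.

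For (ii), I will argue that at $p=p_{ss}$ the UE subproblem $\arg\max_{r_i}(\log V_i(r_i)-p(n)r_i)$ has its stationary point inside the concave region of $S_i$, where the second derivative of $\log V_i$ is \emph{positive} (sigmoidal dominance before the inflection) and so the stationary point of $L_i$ is a local minimum rather than a maximum. The UE response then jumps to a boundary value — either $r_i(n)\approx 0$ or a rate well past the inflection — depending on which side of the crest the current $p(n)$ falls. Consequently the bid $w_i(n)=p(n)r_i(n)$ alternates between a small and a large value, and the eNB update $p(n+1)=\sum_i w_i(n+1)/R$ flips accordingly, producing a two-cycle (or bounded oscillation) centered on $p_{ss}$. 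To turn this into a proof I will implicitly differentiate $S_i(r_i)=p$ to track the sign and magnitude of $\partial r_i/\partial p$, and show that near $p_{ss}$ the composite update map has derivative of magnitude strictly greater than one, making the fixed point unstable.

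The main obstacle will be this second step: making the ``fluctuation'' claim quantitative rather than merely qualitative. The difficulty is that the discontinuous boundary jump in the UE response makes the usual contraction-mapping machinery inapplicable, so I expect to mirror the instability-analysis style of \cite{kelly98ratecontrol,Low99optimizationflow} but with the non-convex $S_i$ in place of the convex one. If a direct derivative-magnitude argument proves fragile, a fallback is to construct an explicit two-periodic orbit $\{(p_A,p_B),(r_A,r_B)\}$ straddling $p_{ss}$ by exploiting the symmetry of $S_i$ about its inflection at $r_{ij}^s$, and verify the iteration cycles on that orbit.
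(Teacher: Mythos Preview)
Your part (i) is essentially what the paper does: it observes that when $\sum_i \max_j r_{ij}^s > R$ some UE must satisfy $r_{ij}^{\text{opt}}<b_{ij}$ and then simply reads off $p_{ss}$ by plugging $r_{ij}=b_{ij}/2$ into the sigmoidal slope formula (\ref{eqn:sigmoid_derivative}). So far you are aligned with the paper.

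Part (ii), however, rests on a false premise. You write that in the concave region of $S_i$ ``the second derivative of $\log V_i$ is \emph{positive} (sigmoidal dominance before the inflection) and so the stationary point of $L_i$ is a local minimum rather than a maximum.'' This directly contradicts Lemma~\ref{lem:concavity}, which shows $\frac{d^2}{dr_{ij}^2}\log U_{ij}(r_{ij})<0$ for \emph{every} $r_{ij}>0$, sigmoidal or logarithmic; hence $\log V_i$ is strictly concave throughout, $S_i'=(\log V_i)''<0$ everywhere, and the UE subproblem $\max_{r_i}\{\log V_i(r_i)-p(n)r_i\}$ always has a unique interior maximizer. There is no boundary jump and no two-valued response of the type you describe, so the discontinuity-based oscillation argument and the two-periodic-orbit fallback both collapse.

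The paper's mechanism is different and purely qualitative. What changes sign at $r_{ij}^s$ is the curvature of $S_i$ itself (i.e.\ the \emph{third} derivative of $\log V_i$), not the concavity of $\log V_i$. The argument is that when the optimal root of $S_i(r_i)=p$ sits near this inflection of $S_i$, a small perturbation of $p(n)$ moves the root across the convex/concave portions of $S_i$, producing a large swing in $r_i(n)$; that swing feeds into $w_i(n)=p(n)r_i(n)$, then into $p(n{+}1)=\sum_i w_i(n)/R$, and so on. The paper does not attempt a contraction-constant or derivative-magnitude calculation; it stops at this sensitivity-propagation description. Your instinct to quantify via $\partial r_i/\partial p = 1/S_i'(r_i)$ is reasonable, but note that $S_i'<0$ always, so the instability must come from $|S_i'|$ being \emph{small} (high sensitivity) near $r_{ij}^s$, not from a sign change of $(\log V_i)''$.
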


\begin{proof}
It follows from lemma \ref{lem:slope_curve} that for $\sum_{i=1}^{M}r_{ij}^{\text{inf}}>R$,  $\exists \:\: i$ such that the optimal rates $r_{ij}^{\text{opt}} < b_{ij}$. Thus, $p_{ss} \approx \frac{a_{ij}d_{ij} e^{\frac{a_{ij}b_{ij}}{2}}}{1-d_{ij}(1+e^{\frac{a_{ij}b_{ij}}{2}})} + \frac{a_{ij}e^{\frac{a_{ij}b_{ij}}{2}}}{(1+e^{\frac{a_{ij}b_{ij}}{2}})}$ is the optimal shadow price for the optimization problem in equation (\ref{eqn:opt_sub1}). Then, a small change in the shadow price $p(n)$ at the $n^{th}$ iteration can cause the rate $r_{ij}(n)$ (the root of $S_{ij}(r_{ij}) - p(n) =0$) to fluctuate between the concave and convex curvature of the slope curve $S_{ij}(r_{ij})$ for the $i^{th}$ UE. Therefore, it produces a fluctuation in the bid value $w_i(n)$ sent to the eNB, which in turn induces a vacillation of the shadow price $p(n)$ transmitted by eNB to the UEs. Hence, the iterative solution oscillates about the global optimal rates $r_{ij}^{\text{opt}}$.
\end{proof}

\begin{thm}\label{thm:sig_not_conv}
EURA Algorithms \ref{alg:UE_distributed} and \ref{alg:eNodeB_distributed} do not converge to the optimal solution for all eNB rates $R$.
\end{thm}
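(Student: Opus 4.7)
The theorem is the negation of a universal quantifier over $R$, so it suffices to exhibit a single value of $R$ at which Algorithms \ref{alg:UE_distributed} and \ref{alg:eNodeB_distributed} fail to converge. The parameter space has already been partitioned by the two preceding corollaries: Corollary \ref{cor:sig_convergence} covers the abundant-resource regime $\sum_{i=1}^{M}\max_j r_{ij}^{s} \ll R$ in which convergence does hold, while Corollary \ref{cor:sig_fluctuate} covers the scarce-resource regime $\sum_{i=1}^{M}\max_j r_{ij}^{s} > R$ in which the iterates fluctuate. My plan is to take any $R$ in the latter regime as the counterexample.

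Concretely, I would first fix the system of Section \ref{sec:sys_model} with at least one UE hosting a sigmoidal (real-time) application and set $\Sigma := \sum_{i=1}^{M}\max_{j} r_{ij}^{s}$. Because Lemma \ref{lem:slope_curve} identifies the sigmoidal inflection abscissa as $r_{ij}^{s}\approx b_{ij}>0$, the quantity $\Sigma$ is strictly positive, so the open interval $(0,\Sigma)$ is non-empty and any admissible eNB capacity $R\in(0,\Sigma)$ places the instance inside the hypothesis of Corollary \ref{cor:sig_fluctuate}. At such an $R$ the constraint $\sum_{i=1}^{M}\max_{j} r_{ij}^{s}>R$ forces at least one optimal rate $r_{ij}^{\text{opt}}$ to lie below the corresponding inflection point $b_{ij}$, i.e.\ in the concave branch of the slope curve $S_{ij}$.

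Next, invoking Corollary \ref{cor:sig_fluctuate} directly, the update rule of Algorithm \ref{alg:UE_distributed} obtains $r_{ij}(n)$ as the root of $S_{ij}(r_{ij})-p(n)=0$; since $S_{ij}$ is locally concave around that root, any small perturbation in the shadow price $p(n)$ pushes $r_{ij}(n)$ across the inflection, which in turn perturbs the bid $w_{i}(n)=p(n)r_{i}(n)$ sent to the eNB and thus the subsequent shadow price $p(n+1)=\frac{\sum_{i}w_{i}(n)}{R}$ broadcast by Algorithm \ref{alg:eNodeB_distributed}. The feedback loop therefore sustains a persistent oscillation about $r_{ij}^{\text{opt}}$ and $p_{ss}$, so $|w_{i}(n)-w_{i}(n-1)|$ never falls below $\delta$ and the stopping criterion is never triggered. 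Hence the algorithms do not converge at this $R$, which suffices to disprove universal convergence. The main (and rather minor) obstacle is simply confirming that a single witness is enough to refute the universally quantified statement, and that the witness is admissible — both are immediate from the strict positivity of the sigmoidal inflection abscissae in Section \ref{sec:utilities}.
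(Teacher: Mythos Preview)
Your proposal is correct and matches the paper's own argument, which simply states that the theorem follows directly from Corollaries \ref{cor:sig_convergence} and \ref{cor:sig_fluctuate}. You have merely unpacked this one-line appeal in more detail by explicitly exhibiting the scarce-resource regime $R\in(0,\Sigma)$ as the witness; the logical content is identical.
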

\begin{proof}
It directly follows from the corollaries \ref{cor:sig_convergence} and \ref{cor:sig_fluctuate} that the EURA algorithm does not converge to the global optimal solution for all values of $R$.
\end{proof}

The potential EURA seesawing about optimal rates and dearth of convergence thereof motivate us to include some robustness into the procedure. This is done in section \ref{sec:robust_alg}.

\subsection{EURA Robust Algorithm}\label{sec:robust_alg}
Incorporate robustness into the EURA Algorithms \ref{alg:UE_distributed} and \ref{alg:eNodeB_distributed} so that they converge for all eNB rates requires the algorithm to refrain from fluctuations in the non-convergent region for $\sum_{i=1}^{M}\underset{j}\max \:\:r_{ij}^{s} \ll R$. To do this, a \textit{fluctuation decay function} $\Delta w(n)$ as below reduces the step size between the current and previous bid, i.e. $w_i(n) - w_i(n-1)$, for every user $i$ if a fluctuation occurs. The allocated rates should coincide with the those of EURA Algorithms \ref{alg:UE_distributed} and \ref{alg:eNodeB_distributed} for $\sum_{i=1}^{M}\underset{j}\max \:\:r_{ij}^{s}>R$.
\begin{itemize}
\item \textit{Exponential function}: $\Delta w(n) = l_1 e^{-\frac{n}{l_2}}$.
\item \textit{Rational function}: $\Delta w(n) = \frac{l_3}{n}$.
\end{itemize}
where$l_1, l_2, l_3$ can be adjusted to change the bids $w_i$ decay rate.

\begin{rem}
The fluctuation decay function can be included in either UE EURA Algorithm or eNB EURA Algorithm.
\end{rem}

In our model, we choose to incorporate the decay function into the UE EURA Algorithm even though, as mentioned before, it can be placed in the eNB EURA as well. The fledgling robust EURA process is illustrated in Algorithms \ref{alg:UE_EURA} and \ref{alg:eNodeB_EURA}. Here, starting with $w_i(0) = 0$, each UE commences transmitting an initial bid $w_i(1)$ to the eNB, which at each iterate $n$ calculates the difference between the currently and formerly received bids $w_i(n)$ and $w_i(n-1)$, then exits if the difference falls below a threshold $\delta$; otherwise, it computes the shadow price $p_E(n) = \frac{\sum_{i=1}^{M}w_i(n)}{R}$ and send it to its covered UEs, amongst which The $i^{th}$ UE obtains the rate $r_{i}$ maximizing the statement $\log \beta_{i} V_i(r_{i}) - p_E(n)r_{i}$, estimates its new bid $w_i(n)=p_E(n) r_{i}(n)$, and sends it to the eNB.

\begin{algorithm}
\caption{UE Robust EURA Algorithm}\label{alg:UE_EURA}
\begin{algorithmic}
\STATE {Send initial bid $w_i(1)$ to eNB.}
\LOOP
	\STATE {Receive shadow price $p(n)$ from eNB.}
	\IF {STOP from eNodeB}
		\STATE {Calculate allocated rate $r_{ij} ^{\text{opt}}=\frac{w_i(n)}{p(n)}$.}
	\ELSE
		\STATE {Solve $r_{i}(n) = \arg \underset{r_i}\max \Big(\beta_i \log V_i(r_i) - p_E(n)r_i\Big)$.}
	        \STATE {Calculate new bid $w_i (n)= p(n) r_{i}(n)$.}
	        \IF {$|w_i(n) - w_i(n-1)| >\Delta w(n)$} %
			\STATE {$w_i(n) =w_i(n-1) + \text{sign}(w_i(n) -w_i(n-1))\Delta w(n)$}
	   	        \COMMENT {$\Delta w = l_1 e^{-\frac{n}{l_2}}$ or $\Delta w = \frac{l_3}{n}$}
	        \ENDIF
	        \STATE {Send new bid $w_i (n)$ to eNB.}
	\ENDIF
\ENDLOOP
\end{algorithmic}
\end{algorithm}

\begin{algorithm}
\caption{eNB EURA Algorithm}\label{alg:eNodeB_EURA}
\begin{algorithmic}
\LOOP
	\STATE {Receive bids $w_i(n)$ from UEs}
	\COMMENT{Let $w_i(0) = 1\:\:\forall i$}
	\IF {$|w_i(n) - w_i(n-1)|<\delta  \:\:\forall i$}
   		\STATE {STOP and allocate rates (i.e $r_{i}^{\text{opt}}$ to user $i$)}
	\ELSE
		\STATE {Calculate $p_E(n) = \frac{\sum_{i=1}^{M}w_i(n)}{R}$}
		\STATE {Send new shadow price $p_E(n)$ to all UEs}
	\ENDIF
\ENDLOOP
\end{algorithmic}
\end{algorithm}

\begin{rem}
If the subscriber differentiation parameter $\beta_{i}$ is available only at the eNB (or other network provider unit), the shadow price $p_E$ is changed to $\frac{p_E}{\beta_{i}}$.
\end{rem}

\subsection{IURA Algorithm}\label{sec:IURA_alg}
This section presents the second stage of the distributed resource allocation during which the application rates $r_{ij}$ are optimally assigned internally to the UEs in accordance with Algorithm \ref{alg:IURA}, where the $i^{th}$ UE leverages the EURA allocated rate $r_{i}^{\text{opt}}$ to solve $\textbf{r}_i =  \arg \underset{\textbf{r}_i}\max \sum_{j=1}^{N_i}(\alpha_{ij}\log U_{ij}(r_{ij})- p_I r_{ij}) +p_I r_{ij}^{\text{opt}}$.

\begin{algorithm}
\caption{UE IURA Optimization}\label{alg:IURA}
\begin{algorithmic}
\LOOP
      \STATE {Receive $r_i^{\text{opt}}$ from eNB.
      \COMMENT {by EURA Algorithms}}
      \STATE {Solve \\$\textbf{r}_i =  \arg \underset{\textbf{r}_i}\max \sum_{j=1}^{N_i}(\alpha_{ij}\log U_{ij}(r_{ij})- p_I r_{ij}) +p_I r_i^{\text{opt}}$}
      \COMMENT {$\textbf{r}_i = \{r_{i1}, r_{i2}, ..., r_{iN_i}\}$}
      \STATE {Allocate $r_{ij}$ to the $j^{th}$ application.}
\ENDLOOP
\end{algorithmic}
\end{algorithm}

Next, section \ref{sec:one_stage_alg} includes the centralized resource allocation algorithm in which the application rate are assigned in a monolithic stage.

\section{Centralized Algorithm}\label{sec:one_stage_alg}
The process for the centralized resource allocation (\ref{eqn:opt_multiapp}) consists of UE and eNB parts shown in respectively Algorithms \ref{alg:Centralized_UE} and \ref{alg:Centralized_eNodeB}, whose executions (Figure \ref{fig:multiple_app_flow_centralized}) start by UEs transmitting their application utility parameters to the eNB, which in turn solves the entire optimization by allotting the bandwidth to the applications in an optimum fashion. The rates, solutions to equation (\ref{sec:one_stage_global_optimal}), are the values $r_{ij}$ which solve the equation $\frac{\partial \log U_{ij}(r_{ij})}{\partial r_{ij}} = p(n)$ and are the intersection of the time varying shadow price, horizontal line $y = p(n)$, with the curve $y = \frac{\partial \log U_{ij}(r_{ij})}{\partial r_{ij}}$ geometrically.

\begin{algorithm}
\caption{UE Centralized Algorithm}\label{alg:Centralized_UE}
\begin{algorithmic}
\LOOP
      \STATE {Send application utility parameters $\{a_{ij}, b_{ij}, \alpha_{ij}, k_{ij}, r_{ij}^{\text{max}}\}$ to eNB.}
      \STATE {Receive rates $r_i^{\text{opt}} = \{r_{i1}^{\text{opt}}, r_{i2}^{\text{opt}},...,r_{iN_i}^{\text{opt}}\}$ from eNB.}
      \STATE {Allocate rate $r_{ij} ^{\text{opt}}$ internally to $j^{th}$ applications.}
\ENDLOOP
\end{algorithmic}
\end{algorithm}

\begin{algorithm}
\caption{eNB Centralized Algorithm}\label{alg:Centralized_eNodeB}
\begin{algorithmic}
\LOOP
      \STATE {Receive application utility parameters $\{a_{ij}, b_{ij}, \alpha_{ij}, k_{ij}, r_{ij}^{\text{max}}\}$  from UEs.}
      \STATE {Solve $\textbf{r} =  \arg \underset{\textbf{r}}\max \sum_{i=1}^{M}{\beta_{i}}\sum_{j=1}^{N_i}{\alpha_{ij}}\log U_{ij}(r_{ij}) - p(\sum_{i=1}^{M}\sum_{j=1}^{N_i}r_{ij} - R)$.}
      \COMMENT{where $\textbf{r} = \{r_{1}, r_{2}, ..., r_{M}\}$ and ${r}_i = \{r_{i1}, r_{i2}, ..., r_{iN_i}\}$}
      \STATE {Send ${r}_i = \{r_{i1}, r_{i2}, ..., r_{iN_i}\}$ to $i^{th}$ UE.}
\ENDLOOP
\end{algorithmic}
\end{algorithm}

\begin{figure}[t!]
\centering
  \includegraphics[width=\plotwidth]{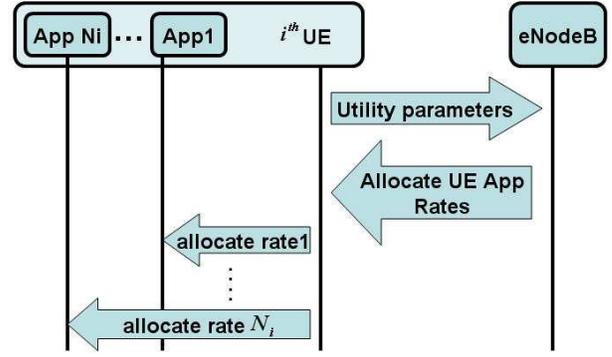}
  \caption{Centralized Algorithm: Resources are allocated to the applications running on the UEs in a monolithic stage, in which UEs transmit their application utility parameters to their eNB, which calculates the optimal application rates and transmit them to the germane UEs.}
  \label{fig:multiple_app_flow_centralized}
\end{figure}

Section \ref{sec:equivalence} proves that the distributed and centralized resource allocation methods are equivalent.

\section{Equivalence}\label{sec:equivalence}
Here, we show the mathematical equivalence of the distributed resource allocation in equations (\ref{eqn:opt_sub1}) and (\ref{eqn:opt_sub2}) with the centralized approach in equation (\ref{eqn:opt_multiapp}). First , lemma \ref{lem:opt_equivalent} is conceivable.

\begin{lem}\label{lem:opt_equivalent}
The aggregated and application utility slope curvature functions $S_{i}(r_{i}) = \frac{\partial \log V_{i}(r_{i})}{r_{i}}$ and $S_{ij}(r_{ij}) = \frac{\partial \log U_{ij}(r_{ij})}{r_{ij}}$ are invertible and their inverse functions $r_{i} = S_{i}^{-1}(.)$ and $r_{ij} = S_{ij}^{-1}(.)$ are strictly decreasing.
\end{lem}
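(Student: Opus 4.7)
The plan is to derive the lemma directly from the strict concavity results already in hand, without repeating any of the differentiation work. By definition $S_i(r_i)=\partial \log V_i(r_i)/\partial r_i$ and $S_{ij}(r_{ij})=\partial \log U_{ij}(r_{ij})/\partial r_{ij}$, so the slope curvature functions are exactly the first derivatives of the aggregated and application log-utilities.

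First I would invoke Lemma~\ref{lem:concavity}, which established that $\log U_{ij}(r_{ij})$ is strictly concave for both the sigmoidal and logarithmic utility models and, by the decomposition $\log V_i(r_i)=\sum_{j=1}^{N_i}\alpha_{ij}\log U_{ij}(r_{ij})$ with $\alpha_{ij}>0$, that $\log V_i(r_i)$ is strictly concave as well. A standard calculus fact \cite{Boyd2004} is that the first derivative of a strictly concave differentiable function is strictly decreasing; equivalently, the second-derivative calculations already performed in the proofs of Lemmas~\ref{lem:concavity} and~\ref{lem:slope_curve} (see in particular the inequality $\partial S_i/\partial r_i<0$ noted after equation~(\ref{eqn:diff_slope})) give $S'_{ij}(r_{ij})<0$ and $S'_i(r_i)<0$ everywhere on the admissible rate interval. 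Hence both $S_i$ and $S_{ij}$ are strictly monotone decreasing.

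Next, since a strictly monotone function on an interval is injective onto its image, both $S_i$ and $S_{ij}$ admit well-defined inverses $S_i^{-1}$ and $S_{ij}^{-1}$ on their ranges. To see that the inverses are themselves strictly decreasing, take any two values $p_1<p_2$ in the range of $S_{ij}$ and let $r_1=S_{ij}^{-1}(p_1)$, $r_2=S_{ij}^{-1}(p_2)$. If we had $r_1\le r_2$, then strict monotone decrease of $S_{ij}$ would give $S_{ij}(r_1)\ge S_{ij}(r_2)$, i.e.\ $p_1\ge p_2$, contradicting $p_1<p_2$; therefore $r_1>r_2$, establishing that $S_{ij}^{-1}$ is strictly decreasing. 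The identical argument applied at the aggregated level yields the corresponding statement for $S_i^{-1}$.

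I do not anticipate any real obstacle here: the heavy lifting (strict concavity of the log-utilities and the explicit sign of $\partial S_i/\partial r_i$) was already completed in Lemmas~\ref{lem:concavity} and~\ref{lem:slope_curve}, so the proof reduces to citing those results and invoking the elementary fact that a strictly monotone function has a strictly monotone inverse of the same type. The only mild care needed is to note that the argument applies on the relevant domains $r_{ij}\in(0,R)$ and $r_i\in(0,R)$, which are precisely the ranges over which the resource allocation problems~(\ref{eqn:opt_sub1}) and~(\ref{eqn:opt_sub2}) are posed.
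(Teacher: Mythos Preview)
Your proposal is correct and follows essentially the same approach as the paper: both arguments establish that $S_{ij}$ and $S_i$ are strictly decreasing by appealing to the sign computations already carried out in Lemma~\ref{lem:concavity} (and, in your case, also Lemma~\ref{lem:slope_curve}), and then conclude invertibility with strictly decreasing inverses. The only cosmetic difference is that the paper re-states the derivative inequalities for the logarithmic and sigmoidal cases separately before aggregating via equation~(\ref{eqn:sum_slope}), whereas you cite the strict concavity directly and add an explicit $p_1<p_2\Rightarrow r_1>r_2$ verification for the inverse; the substance is the same.
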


\begin{proof}
The concavity of the logarithmic utility $U_ij$ yields in $U'_{ij}(r_{ij}) = \frac{\partial U_{ij}(r_{ij})}{\partial r_{ij}} > 0$ and $U''_{ij}(r_{ij}) = \frac{\partial ^2U_{ij}(r_{ij})}{\partial r_{ij}^2} < 0$, and lemma \ref{lem:concavity} stems out $S_{ij}(r_{ij}) = \frac{\partial \log(U_{ij}(r_{ij}))}{\partial r_{ij}} =  \frac{U'_{ij}(r_{ij})}{U_{ij}(r_{ij})} > 0$ and $\frac{\partial S_{ij}(r_{ij})}{\partial r_{ij}} =  \frac{U''_{ij}(r_{ij})U_{ij}(r_{ij})-U'^2_{ij}(r_{ij})}{U^2_{ij}(r_{ij})} < 0$. Also, for the utility function, we have $ U_{ij}(r_{ij}) > 0$, $U_{ij}(r_{ij})$ is increasing, and it is twice differentiable with respect to $r_{ij}$ (section \ref{sec:Problem_formulation}). Therefore, $S_{ij}(r_{ij})$ of the logarithmic utility function is strictly decreasing. From equation (\ref{eqn:sigmoid_derivative}), for the sigmoidal utility function $U_{ij}(r_{ij})$ where  $0 < r_{ij} < R$, we can write inequality set (\ref{eqn:sigmoid_derivative}), giving that $S_{ij}(r_{ij})$ of the sigmoidal utility 
function is strictly decreasing.

\begin{equation}\label{eqn:sigmoid_derivative}
\begin{aligned}
S_{ij}(r_{ij}) > 0, \frac{\partial}{\partial r_{ij}}S_{ij}(r_{ij}) < 0
\end{aligned}
\end{equation}

Equation (\ref{eqn:sum_slope}) and inequalities \ref{eqn:sigmoid_derivative} yield in inequalities (\ref{eqn:sum_slope_curvature_ineq1}). Henceforth, $S_{ij}(r_{ij})$ and $S_{i}(r_{i})$ of all the utilities in section \ref{sec:Problem_formulation} are strictly decreasing functions; thereby, the slope curvature functions $S_{ij}(r_{ij})$ and $S_{i}(r_{i})$ are invertible and the inverse functions are strictly decreasing.

\begin{equation}\label{eqn:sum_slope_curvature_ineq1}
\begin{aligned}
S_{i}(r_{i}) = \sum_{j=1}^{N^{S}_i}\alpha_{ij}S_{ij}(r_{ij})+\sum_{j=N^{S}_i+1}^{N_i}\alpha_{ij}S_{ij}(r_{ij}) >0\\
\frac{\partial S_{i}(r_{i})}{\partial r_{i}} = \sum_{j=1}^{N^{S}_i}\alpha_{ij}\frac{\partial S_{ij}(r_{ij})}{\partial r_{ij}}+\sum_{j=N^{S}_i+1}^{N_i}\alpha_{ij}\frac{\partial S_{ij}(r_{ij})}{\partial r_{ij}} <0
\end{aligned}
\end{equation}
\end{proof}

\begin{cor}\label{cor:opt_equivalence}
The optimal rates assigned by the distributed optimization in equations (\ref{eqn:opt_sub1}) and (\ref{eqn:opt_sub2}) is equal to the one allocated by the centralized optimization in equation (\ref{eqn:opt_multiapp}).
\end{cor}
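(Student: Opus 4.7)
The plan is to match the KKT conditions of the centralized problem (\ref{eqn:opt_multiapp}) against those of the two-stage distributed problem (\ref{eqn:opt_sub1})--(\ref{eqn:opt_sub2}) and then invoke uniqueness of the optimum, which is granted by Theorem~\ref{thm:EURA_global_soln} and Corollaries~\ref{cor:iura_conv} and~\ref{cor:centr_multiapp_conv}. Because every log-transformed formulation is strictly concave in its rate variables, stationarity together with an active budget constraint pins the allocation down unambiguously, so it is enough to verify that the two first-order systems have the same solution.

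First I would form the Lagrangian of the log-centralized problem (\ref{eqn:opt_multiapp_log}) with a single budget multiplier $p^{C}$, obtaining the stationarity condition
\begin{equation*}
\beta_{i}\alpha_{ij}S_{ij}(r_{ij}^{C}) \;=\; p^{C} \qquad \forall\,(i,j)
\end{equation*}
coupled with the tight constraint $\sum_{i,j} r_{ij}^{C} = R$. For the distributed scheme, the log-EURA Lagrangian yields $\beta_{i} S_{i}(r_{i}^{E}) = p^{E}$ with $\sum_{i} r_{i}^{E} = R$, and for each $i$ the log-IURA Lagrangian yields $\alpha_{ij} S_{ij}(r_{ij}^{I}) = p_{i}^{I}$ with $\sum_{j} r_{ij}^{I} = r_{i}^{E}$.

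Next I would bridge the two systems via an envelope identity. Because Lemma~\ref{lem:concavity} makes the inner log-IURA objective strictly concave, $\log V_{i}(r_{i})$ is its unique value function in the parameter $r_{i}$ and is differentiable with $\partial \log V_{i}/\partial r_{i} = p_{i}^{I}$. Substituting into the EURA condition gives $\beta_{i}p_{i}^{I} = p^{E}$, and chaining with the IURA condition produces $\beta_{i}\alpha_{ij}S_{ij}(r_{ij}^{I}) = p^{E}$ for every $(i,j)$. This is exactly the centralized stationarity with the identification $p^{C} = p^{E}$, and since the two-stage total budget $\sum_{i,j} r_{ij}^{I} = \sum_{i} r_{i}^{E} = R$ matches the centralized one, both procedures satisfy the same KKT system.

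Finally I would close the argument by invoking Lemma~\ref{lem:opt_equivalent}: strict monotonicity and invertibility of each $S_{ij}$ turn the stationarity equation into $r_{ij} = S_{ij}^{-1}\bigl(p/(\beta_{i}\alpha_{ij})\bigr)$, a strictly decreasing function of $p$, whence the budget identity $\sum_{i,j} r_{ij} = R$ admits exactly one price. Both methods therefore converge to the same shadow price and hence to the same rate vector, establishing $r_{ij}^{I} = r_{ij}^{C}$. The most delicate step is the envelope identity at boundary allocations where some $r_{ij}$ might vanish; a clean workaround is to observe that by the very definition $\log V_{i}(r_{i}) = \max_{\sum_{j} r_{ij} = r_{i}} \sum_{j} \alpha_{ij}\log U_{ij}(r_{ij})$, the centralized objective decomposes as $\max_{\sum_{i} r_{i}\le R}\sum_{i}\beta_{i}\log V_{i}(r_{i})$, which is precisely the log-EURA problem, so the equivalence reduces to a tautology modulo the already-established convexity.
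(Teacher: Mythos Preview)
Your proposal is correct and follows the paper's own Lagrangian-comparison strategy: write the stationarity conditions for the centralized, EURA, and IURA problems, relate their multipliers, and then invoke the invertibility of $S_{ij}$ from Lemma~\ref{lem:opt_equivalent} to conclude equal rates. The paper links the multipliers by summing the per-application conditions over $j$ and appealing to the identity $S_{i}=\sum_{j}\alpha_{ij}S_{ij}$ of equation~(\ref{eqn:sum_slope}), arriving at the relations $p_{E}=N_{i}p_{T}$ and $p_{T}=\beta_{i}p_{I}$; your envelope-theorem step instead yields the tidier chain $\beta_{i}p_{i}^{I}=p^{E}$ and hence $p^{C}=p^{E}$ directly, sidestepping the $N_{i}$ factor. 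Your closing remark---that $\log V_{i}$ is by construction the value function of the inner maximization, so the log-centralized objective collapses to the log-EURA objective outright---is a more elementary argument that the paper does not make, and it renders the equivalence almost immediate once strict concavity is in hand.
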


\begin{proof}
The centralized optimization's (equation(\ref{eqn:opt_multiapp_log})) lagrangian can be written as equation (\ref{eqn:opt_multiapp_log_Lag}) where $z \geq 0$ is the slack variable and $p_T$ is the lagrange multiplier.

\begin{equation}\label{eqn:opt_multiapp_log_Lag}
L_T (\textbf{r}) =  (\sum_{i=1}^{M}{\beta_{i}}\sum_{j=1}^{N_i}{\alpha_{ij}}\log U_{ij}(r_{ij})) - p_T ( \sum_{i=1}^{M}\sum_{j=1}^{N_i}r_{ij} - R + z)
\end{equation}

Then, we have that:

\begin{equation}\label{eqn:opt_multiapp_log_Lag_diff}
\frac{\partial L_T(\textbf{r})}{\partial r_{ij}}  =  \beta_{i} \alpha_{ij} S_{ij}(r_{ij}) - p_T = 0 \Rightarrow p_T =  \beta_{i} \alpha_{ij} S_{ij}(r_{ij})
\end{equation}

so, the $i^{th}$ UE's $j^{th}$ application rate is:

\begin{equation}\label{eqn:opt_multiapp_log_inv}
r_{ij} = S_{ij}^{-1}(\frac{p_T}{\beta_{i} \alpha_{ij}})
\end{equation}

Using equation (\ref{eqn:sum_slope}), we can write:

\begin{equation}\label{eqn:opt_multiapp_log_p_with_N_i}
N_ip_T =  \beta_{i} S_{i}(r_{i})
\end{equation}

And the $i^{th}$ UE rate can be calculated as equation (\ref{eqn:opt_multiapp_log_inv_rate}).

\begin{equation}\label{eqn:opt_multiapp_log_inv_rate}
r_{i} = S_{i}^{-1}(\frac{N_ip_T}{\beta_{i}}).
\end{equation}

The EURA optimization's (equation (\ref{eqn:opt_sub1_log})) lagrangian can be written as equation (\ref{eqn:opt_sub1Lag}) where $z \geq 0$ is the slack variable and $p_E$ is the lagrange multiplier.

\begin{equation}\label{eqn:opt_sub1Lag}
L_E (\textbf{r}) =  (\sum_{i=1}^{M}{\beta_{i}}\log V_{i}(r_{i})) - p_{E} (\sum_{i=1}^{M}r_{i} - R + z)
\end{equation}

Then, we have that:

\begin{equation}\label{eqn:opt_sub1_log_Lag_diff}
\frac{\partial L_E(\textbf{r})}{\partial r_{i}}  =  \beta_{i}S_{i}(r_{i}) - p_{E} = 0 \Rightarrow p_E =  \beta_{i} S_{i}(r_{i})
\end{equation}

So, the $i^{th}$ UE rate is:

\begin{equation}\label{eqn:opt_multiapp_log_inv}
r_{i} = S_{i}^{-1}(\frac{p_E}{\beta_{i}})
\end{equation}

Replacing $S_i$ from equation (\ref{eqn:sum_slope}), we can write:

\begin{equation}\label{eqn:opt_sub1_log_p_f}
p_E =  \beta_{i}\sum_{j=1}^{N_i}\alpha_{ij} S_{ij}(r_{ij})
\end{equation}

And, we get equation (\ref{eqn:opt_sub1_1_log_p_f}) below.

\begin{equation}\label{eqn:opt_sub1_1_log_p_f}
p_E = \sum_{j=1}^{N_i} p_T=  {N_i}p_T
\end{equation}

Equations (\ref{eqn:opt_multiapp_log_inv_rate}) and (\ref{eqn:opt_multiapp_log_inv}) signify that the centralized and EURA optimizations lead to identical UE rates.

The IURA optimization's (equation \ref{eqn:opt_sub2_log}) lagrangian can be written as equation (\ref{eqn:opt_sub2LagIURA}) where $z \geq 0$ is the slack variable and $p_I$ is the lagrange multiplier corresponding to the internal shadow price, price per bandwidth for all applications in the $i^{th}$ UE.

\begin{equation}\label{eqn:opt_sub2LagIURA}
L_I (r_{i}) =  (\sum_{j=1}^{N_i}{\alpha_{ij}}\log U_{ij}(r_{ij})) - p_I (\sum_{j=1}^{N_i}r_{ij} - r_{ij}^{\text{opt}} + z)
\end{equation}

Then, we have that:

\begin{equation}\label{eqn:opt_sub2_log_Lag_diff}
\frac{\partial L_I(r_{i})}{\partial r_{ij}}  =  \alpha_{ij} S_{ij}(r_{ij}) - p_I = 0 \Rightarrow p_I =  \alpha_{ij}  S_{ij}(r_{ij}) \:\:\:\forall \:\:j
\end{equation}

And, summing the $i^{th}$ UE applications gives that:

\begin{equation}\label{eqn:opt_sub2_log_p_f}
\sum_{j=1}^{N_i}p_I =  \sum_{j=1}^{N_i}\alpha_{ij}  S_{ij}(r_{ij})
\end{equation}

Using equation (\ref{eqn:sum_slope}) results in equation (\ref{eqn:opt_sub2_log_p_f}).

\begin{equation}\label{eqn:opt_sub2_log_p_f}
\beta_{i}{N_i}p_I = \beta_{i}S_{i}(r_{i}) = p_E = {N_i}p_T \Rightarrow p_T = \beta_{i} p_I
\end{equation}

So, $i^{th}$ UE's $j^{th}$ application rate can be written as equation (\ref{eqn:opt_sub2_log_p_f}).

\begin{equation}\label{eqn:abc}
r_{ij} = S^{-1}_{ij}(\frac{p_I}{\alpha_{ij}}) = S^{-1}_{ij}(\frac{p_T}{\beta_{i}\alpha_{ij}})
\end{equation}

Considering the constraints of the equation (\ref{eqn:opt_sub2_log}), the total rate of the $i^{th}$ UE can be written as equation (\ref{eqn:opt_sub2_log_inv_rate}).
\begin{equation}\label{eqn:opt_sub2_log_inv_rate}
r^{\text{opt}}_{i} = \sum_{j=1}^{N_i} r_{ij} = \sum_{j=1}^{N_i}S_{ij}^{-1}(\frac{p_T}{\beta_{i}\alpha_{ij}})
\end{equation}

Equations (\ref{eqn:abc}) and (\ref{eqn:opt_sub2_log_p_f}) signify that the centralized and IURA optimizations lead to identical application rates. As such, the UE and application rates assigned by the centralized and distributed optimizations are the same.
\end{proof}

\begin{thm}\label{thm:multiapp_dist}
The distributed optimization in equations (\ref{eqn:opt_sub1}) and (\ref{eqn:opt_sub2}) is equivalent to the centralized optimization in equation (\ref{eqn:opt_multiapp}).
\end{thm}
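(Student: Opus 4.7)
The plan is to treat Theorem \ref{thm:multiapp_dist} as a direct corollary of Corollary \ref{cor:opt_equivalence}, which has already done the heavy lifting by establishing that the UE rates $r_i$ and the application rates $r_{ij}$ produced by the two-stage EURA/IURA decomposition coincide pointwise with those produced by the one-stage centralized optimization. Since ``equivalence'' of two optimization problems here means that they have the same set of primal optimizers (and hence the same objective value on the allocated rates), the work reduces to packaging the rate-identity already established via the inverse slope-curvature functions $S_i^{-1}$ and $S_{ij}^{-1}$.

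First I would invoke Theorem \ref{thm:EURA_global_soln} and Corollaries \ref{cor:iura_conv} and \ref{cor:centr_multiapp_conv} to recall that EURA, IURA, and the centralized problem each possess a unique global optimum; this uniqueness is what lets me conclude ``equivalence'' from an identity of optimizers rather than merely from a coincidence at some specific solution. Next I would reproduce the key chain from Corollary \ref{cor:opt_equivalence} in compact form: the KKT stationarity of the centralized Lagrangian gives $r_{ij}=S_{ij}^{-1}\!\bigl(\tfrac{p_T}{\beta_i\alpha_{ij}}\bigr)$ and, after summation via equation (\ref{eqn:sum_slope}), $r_i=S_i^{-1}\!\bigl(\tfrac{N_i p_T}{\beta_i}\bigr)$. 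The EURA stationarity yields $r_i=S_i^{-1}\!\bigl(\tfrac{p_E}{\beta_i}\bigr)$ and the IURA stationarity yields $r_{ij}=S_{ij}^{-1}\!\bigl(\tfrac{p_I}{\alpha_{ij}}\bigr)$. Matching these through the relation $p_E=N_i p_T$ and $p_T=\beta_i p_I$ derived in the corollary closes the loop.

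Finally I would observe that the shadow-price reconciliation $p_T=\beta_i p_I$, $p_E=N_i p_T$ produces identical $r_i$ and $r_{ij}$ in the two formulations, so the unique centralized optimizer $\mathbf{r}^{\text{opt}}$ coincides with the tuple obtained by first solving EURA for $r_i^{\text{opt}}$ and then solving IURA inside each UE for $r_{ij}^{\text{opt}}$. Combined with the fact that both objective functions reduce (under logarithm, by Lemma \ref{lem:concavity}) to the same separable sum $\sum_i\beta_i\sum_j\alpha_{ij}\log U_{ij}(r_{ij})$ when evaluated at these rates, this establishes mathematical equivalence.

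The main obstacle is not analytical but expository: I would need to state precisely what ``equivalence'' means in this context (identical primal optimizers and identical optimal objective value) so that citing Corollary \ref{cor:opt_equivalence} legitimately discharges the claim; there is no additional calculation to perform, only a clean packaging of the inverse-function identities and the shadow-price relations $p_E = N_i p_T$ and $p_T = \beta_i p_I$ already in hand.
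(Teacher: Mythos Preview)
Your proposal is correct and follows essentially the same approach as the paper: the paper's own proof is a single sentence that cites Corollary \ref{cor:opt_equivalence} and concludes that equal optimal rates imply equivalence of the two optimization formulations. Your version is in fact more careful than the paper's, since you explicitly invoke uniqueness (Theorem \ref{thm:EURA_global_soln}, Corollaries \ref{cor:iura_conv} and \ref{cor:centr_multiapp_conv}) and articulate what ``equivalence'' means, but the underlying argument is identical.
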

\begin{proof}
Stemming out equal rates (Corollary \ref{cor:opt_equivalence}) indicates the distributed and centralized optimizations are equivalent.
\end{proof}

\section{Simulation Results}\label{sec:sim}
A cell with $M = 6$ UEs and an eNB, depicted in Figure \ref{fig:system_model}, is considered and each UE concurrently runs a delay-tolerant and a real-time application with respectively logarithmic and sigmoidal utility functions with parameters in Table \ref{table:parameters}. The sigmoidal utility with parameters $a = 5$, $b=10$ approximates a step function at rate $r =5$ and is a good model for Voice-over-IP (VoIP), while parameters $a = 3$, $b=15$ is an approximation of a real-time application with an inflection point at rate $r=15$ and is conducive to modeling standard definition video streaming, whereas parameters $a = 1$,  $b=25$ is an estimation of another real-time application with the inflection point $r=25$ and is appropriate for the high definition video streaming. Moreover, the logarithmic utilities with $r^{\text{max}} = 100$ and distinct $k_i$ parameters estimate delay-tolerant FTP applications. The plots of the utility functions in Table \ref{table:parameters} are shown in Figure \ref{fig:sim:
app_utilities}, from which we can observe that the real-time applications require a minimum rate, i.e. the inflection point, after which the application QoS is fulfilled to a large extent. On the other hand, the logarithmic utility is provided with some QoS even at low rates suitable for the delay-tolerant nature of the applications. Furthermore, as we can observe from Figure \ref{fig:sim:app_utilities}, in compliance with the properties mentioned in section \ref{sec:Problem_formulation} the utility functions are strictly increasing continuous functions, zero valued at zero rates. Furthermore, the first derivative of the utility functions natural logarithm, $S_ij(r_ij)$, are shown in Figure \ref{fig:sim:diff_log_app_utilities}, which reflects the positivity and decreasing nature of the first derivative in line with lemmas \ref{lem:opt_equivalent} and \ref{lem:opt_equivalent}.

\begin{figure}
\centering
\subfigure[Application Utility Functions]{\label{fig:sim:app_utilities}\includegraphics[width=\plotwidth]{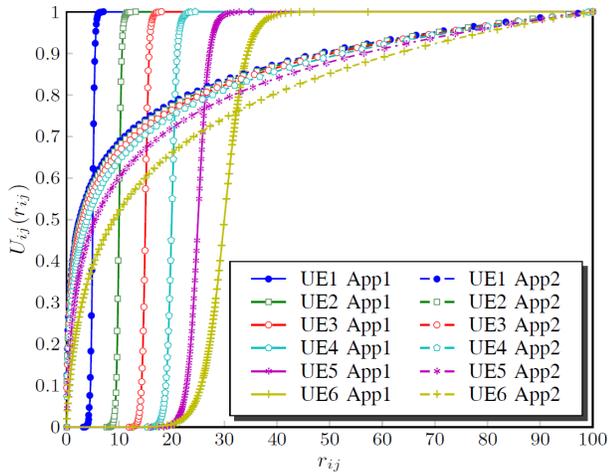}}\qquad
\subfigure[Utility Slope Curvature Functions]{\label{fig:sim:diff_log_app_utilities}\includegraphics[width=\plotwidth]{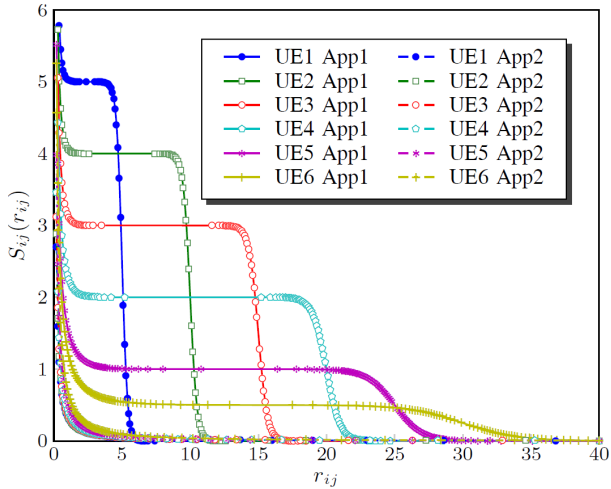}}%
\caption{The system contains $6$ UEs, each concurrently running a delay-tolerant and real-time application with respective identically colored logarithmic and sigmoidal utility functions $U_{ij}$ vs. the application-assigned rates $r_{ij}$ plots in Figure \ref{fig:sim:app_utilities}. Utility slope curvature functions, the first derivative of the application utility natural logarithms $S_{ij}$ with respect to the application rates $r_{ij}$ are illustrated in Figure \ref{fig:sim:diff_log_app_utilities} where identical colors relate to the applications on one UE.}
\end{figure}

Then, the distributed resource allocation approach (Algorithms \ref{alg:UE_EURA}, \ref{alg:eNodeB_EURA}, and \ref{alg:IURA}) and the centralized rate assignment procedure (Algorithms \ref{alg:Centralized_UE} and \ref{alg:Centralized_eNodeB}) were applied to the aforesaid logarithmic and sigmoidal utility functions using MATLAB. To account for the applications usage percentage, we set the application status weight vector in equation (\ref{eqn:utility_agg}) as $\boldsymbol\alpha = \{\alpha_{11}, \alpha_{21}, \alpha_{31}, \alpha_{41}, \alpha_{51}, \alpha_{61},\alpha_{12}, \alpha_{22}, \alpha_{32}, \alpha_{42}, \alpha_{52}, \alpha_{62}\}$ where $\alpha_{ij}$ represents the status weight of the $j^{th}$ application of the $i^{th}$ UE. It is noteworthy that the addition of application usage percentages per UE is unity, i.e. $\alpha_{i1}+\alpha_{i2}=1$.

\begin{figure}
\centering
\subfigure[Aggregated Utility Functions]{\label{fig:sim:users_utilities}\includegraphics[width=\plotwidth]{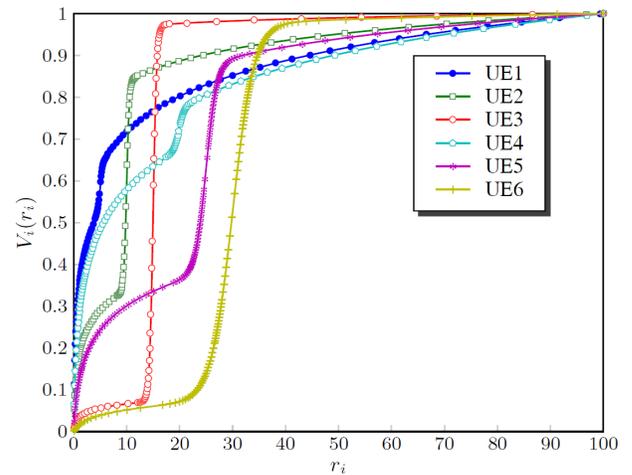}}\qquad
\subfigure[Aggregated Slope Curvature Functions]{\label{fig:sim:diff_log_users_utilities}\includegraphics[width=\plotwidth]{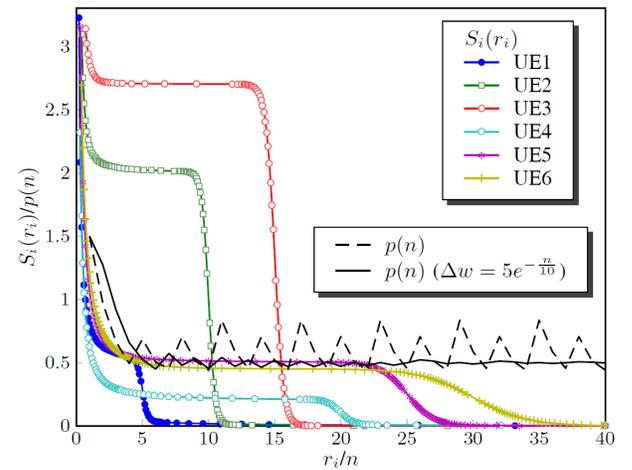}}%
\caption{Figure \ref{fig:sim:users_utilities} plots the aggregated utilities, multiplications of the usage-percentage-powered application utility functions $V_i(r_i)$ vs. the UE rates $r_i$, where $i \in \{1,...,6\}$. Figure \ref{fig:sim:diff_log_users_utilities} illustrates the aggregated slope curvature functions , first derivative of the the aggregated utility natural logarithms $S_i(r_i)$. Furthermore, decay function-induced robustness effect is depicted; As we can see, the lack of decay functions yields in the system instability revealed in the shadow price oscillation.}
\end{figure}

In addition, the aggregated utility functions $V_i(r_i)$ for $i \in \{1,...,6\}$ are depicted in Figure \ref{fig:sim:users_utilities} and the first derivative of their natural logarithm, $S_i(r_i)$ for $i \in \{1,...,6\}$, are illustrated in Figure \ref{fig:sim:diff_log_users_utilities}. As we can see, in compliance with lemma \ref{lem:slope_curve}, the slope curvature functions inflection points occur at the application utility functions' inflection points. Furthermore, in line with lemma \ref{lem:opt_equivalent}, the slope curvature functions are strictly decreasing.

\begin {table}[]
\caption {Applications Utility Parameters}
\label{table:parameters}
\begin{center}
\renewcommand{\arraystretch}{1.4}
\begin{tabular}{| l || l |}
  \hline
  \multicolumn{2}{|c|}{Applications Utilities Parameters} \\  \hline
  UE1 App1 & Sigmoid $a=5,\:\: b=5$  \\ \hline
  UE2 App1 & Sigmoid $a=4,\:\: b=10$ \\ \hline
  UE3 App1 & Sigmoid $a=3,\:\: b=15$ \\ \hline
  UE4 App1 & Sigmoid $a=2,\:\: b=20$ \\ \hline
  UE5 App1 & Sigmoid $a=1,\:\: b=25$\\ \hline
  UE6 App1 & Sigmoid $a=0.5,\:\: b=30$ \\ \hline
  UE1 App2 & Logarithmic $k=15,\:\: r^{\max}=100$  \\ \hline
  UE2 App2 & Logarithmic $k=12,\:\: r^{\max}=100$  \\ \hline
  UE3 App2 & Logarithmic $k=9,\:\: r^{\max}=100$   \\ \hline
  UE4 App2 & Logarithmic $k=6,\:\: r^{\max}=100$   \\ \hline
  UE5 App2 & Logarithmic $k=3,\:\: r^{\max}=100$   \\ \hline
  UE6 App2 & Logarithmic $k=1,\:\: r^{\max}=100$ \\ \hline
\end{tabular}
\end{center}
\end {table}

Next section investigates bids and rate allocations for the UEs and applications in our system under varying eNB resource availabilities.

\subsection{Rate Allocation and Bids for $10\le R\le200$}
In the following simulations, we set the termination threshold $\delta = 10^{-4}$ and the eNB rate $R$ to sweep from 10 to 200 with an step size of 5 bandwidth units. Besides, the application status weights is considered to be $\boldsymbol\alpha = \{0.1, 0.5, 0.9, 0.1, 0.5, 0.9, 0.9, 0.5, 0.1, 0.9, 0.5, 0.1\}$. It is worth mentioning that the addition of usage percentages per UE is unity, e.g. adding the $1st$ and the $6th$ components of the set (which are indeed the usage percentages for both of applications running on UE1), we get $0.1+0.9=1$, and so forth. For the distributed resource allocation (Algorithms \ref{alg:UE_EURA}, \ref{alg:eNodeB_EURA}, and \ref{alg:IURA}), UE assigned rates and pledged bids are depicted in Figure \ref{fig:sim:user_allocated_rates} during the EURA Algorithm with the changes in the eNB available resources $R$. As we can observe, initially all the UEs are allocated some rates which is owing to the fact that they all subsume real-time applications in need of immediate rate 
allocations before any QoS is met. For instance, UE2 has a real-time streaming video application (based on Table \ref{table:parameters}), which requires a bandwidth assignment right away. In Figure \ref{fig:sim:user_allocated_bids}, we show the UEs bids $\{w_i | i \in \{1, ... , 6\}\}$ during the EURA algorithm under changing eNB bandwidth $R$. First of all, we see that the more resources become available at the eNB, the higher rates are assigned to the UEs. On the other hand, the dearth of the resources (small $R$) causes those UEs which have applications with higher bit rate requirements to bid higher in order to gain resources. For instance, since UE2 includes a real-time streaming video application, its urgent need for bandwidth allocation causes its initial higher bid for the resources, which is responded by its fast allocation portrayed in the Figure \ref{fig:sim:user_allocated_rates}.

\begin{figure}
\centering
\subfigure[UE Optimal Rates]{\label{fig:sim:user_allocated_rates}\includegraphics[width=\plotwidth]{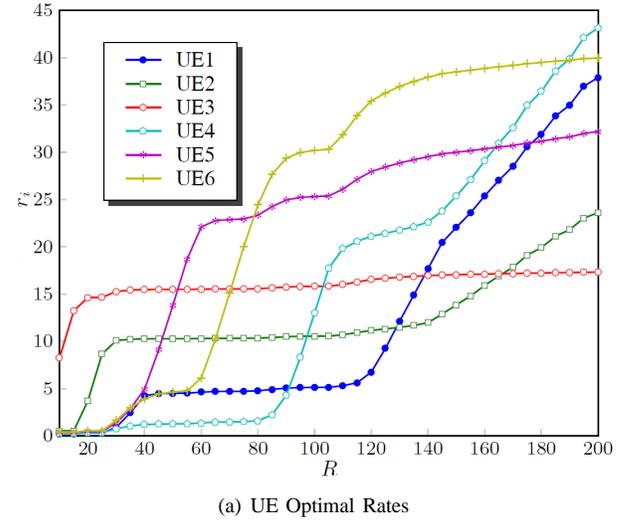}}\qquad
\subfigure[UE Bids]{\label{fig:sim:user_allocated_bids}\includegraphics[width=\plotwidth]{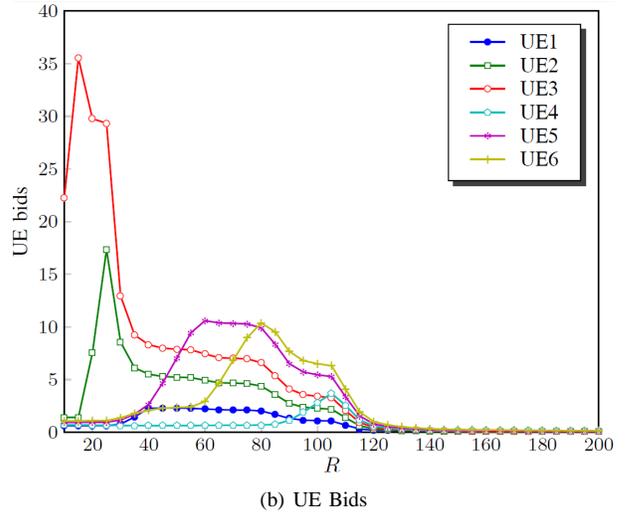}}%
\caption{Figure \ref{fig:sim:user_allocated_rates} depicts the optimal rates allocated to the UEs by the distributed scheme vs. eNB resources. No user is dropped as no assignment is zero. Figure \ref{fig:sim:user_allocated_bids} illustrates the UE bids for acquiring the resources vs. the eNB rate. The applications requiring more resources bid higher. When bandwidth is scarce, applications needing more resources bid significantly higher than the others. The plots reveal that the higher bid are tantamount to receiving more resources.}
\end{figure}

Then, the IURA algorithm has the UEs internally allocate rates to their applications based on the pledged bids as illustrated in Figures \ref{fig:sim:app_allocated_rates} and \ref{fig:sim:app_allocated_bids}. In Figure \ref{fig:sim:app_allocated_rates}, we show the allocated applications rates $\{r_{ij} | i \in \{1, ... , 6\} \wedge j \in \{1,2\}\}$ during the IURA algorithm under changing eNB rate $R$. As we can observe, initially more resources are allocated to the real-time applications since these have more stringent QoS requirements. In Figure \ref{fig:sim:app_allocated_bids}, we illustrate the applications' internally pledged bids $\{w_{ij} | i \in \{1, ... , 6\} \wedge j \in \{1,2\}\}$ during the IURA algorithm under changing eNB rate $R$. Inasmuch as the real-time applications of the UEs need more resources, they bid higher than the delay-tolerant applications specially when the resources are scarce. In fact, we can see that the bid values for the delay-tolerant applications is significantly less 
than those of the real-time ones such that they are very close to the horizontal axis in Figure \ref{fig:sim:app_allocated_bids}. Furthermore, those applications with higher QoS requirements such as the real-time streaming video in UE1 (red plot) bid higher in order to gain more bandwidth. However, as more resources become available at the eNB, bid values slash down as well.

\begin{figure}
\centering
\subfigure[Application Optimal Rates]{\label{fig:sim:app_allocated_rates}\includegraphics[width=\plotwidth]{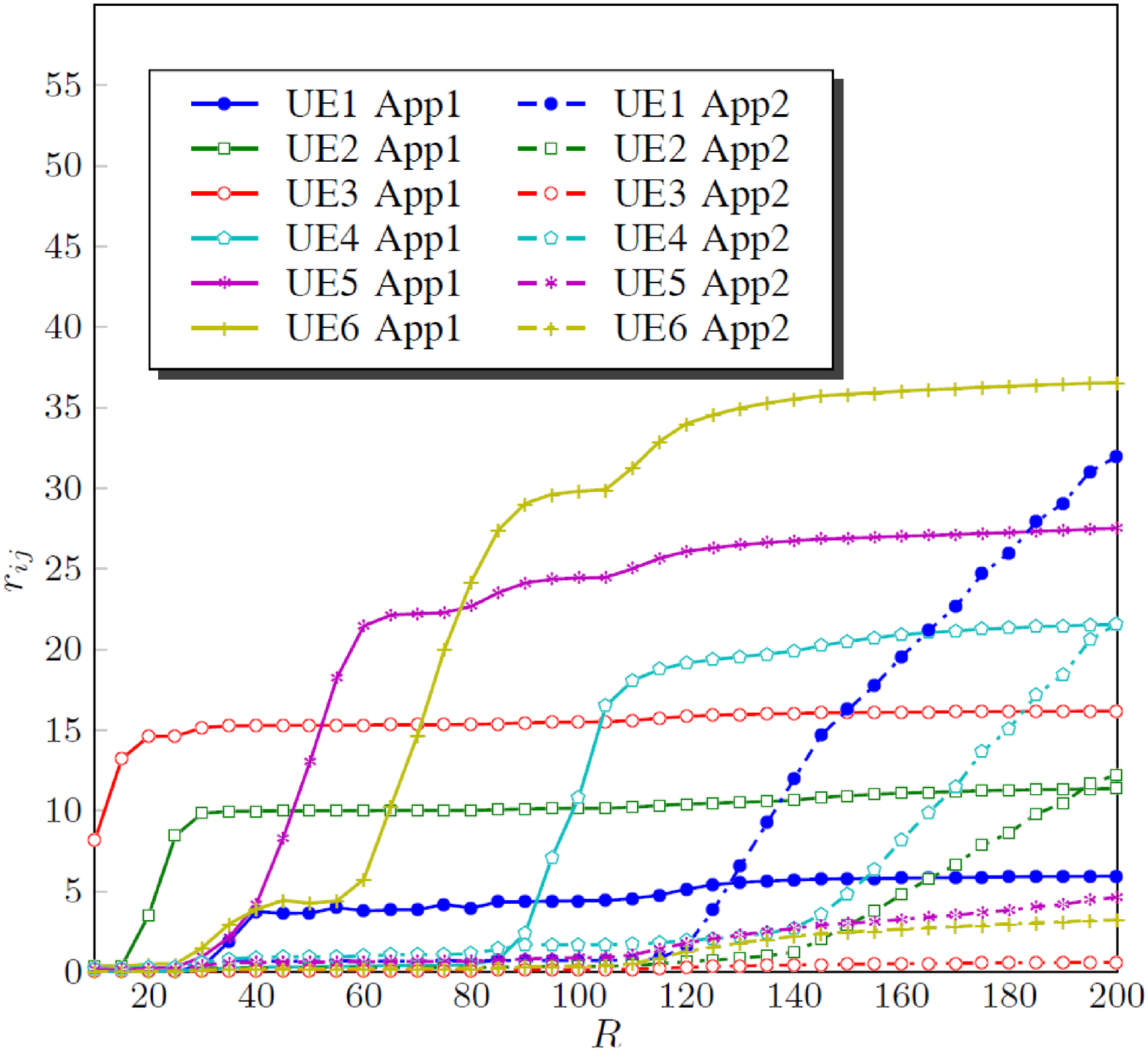}}\qquad
\subfigure[Application Bids]{\label{fig:sim:app_allocated_bids}\includegraphics[width=\plotwidth]{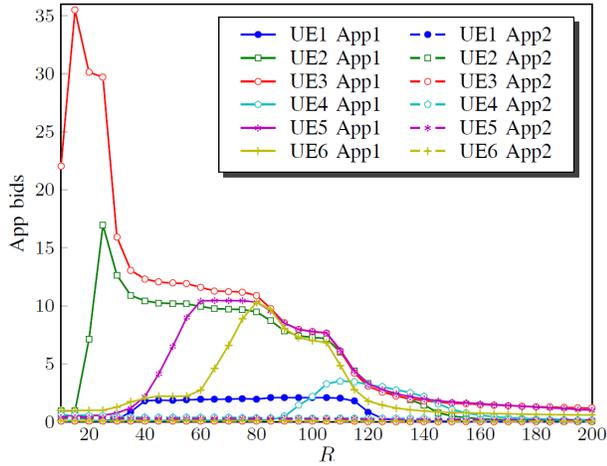}}%
\caption{Figure \ref{fig:sim:app_allocated_rates} depicts the optimal application rates $r_{ij}$ vs. the eNB rate $R$. Applications running on an UE are identically colored. As we can see, real-time applications are initially allocated more resources as opposed to the delay-tolerant ones due to their urgent need for resources. Figure \ref{fig:sim:app_allocated_bids} illustrates the applications bids in the UEs. The real-time applications bid higher when the resources are scarce, while the opulence of eNB resources escalates the application rates and reduces the UE bids.}
\end{figure}

On the contrary, the centralized resource allocation (Algorithms \ref{alg:Centralized_UE} and \ref{alg:Centralized_eNodeB}) assigns the application rates directly by the eNB, and the rates and bids are equal to the ones allocated by the distributed allocation in conformance to the theorem \ref{thm:multiapp_dist}. Running the simulations, we got the same rate and bid diagrams as in the in Figures \ref{fig:sim:app_allocated_rates} and \ref{fig:sim:app_allocated_bids}. It is notable that since utility proportional fairness objective functions are leveraged in the formation of the optimizations in equations (\ref{eqn:opt_multiapp}), (\ref{eqn:opt_sub1}), and (\ref{eqn:opt_sub2}), both the distributed and the centralized algorithms do not assign a zero rate to any UEs, thereby no user is dropped and a minimum QoS is warranted. As we mentioned before, an eNB allocates the majority of the resources to the real-time applications until they reach their utility inflection rate $r_{ij} = b_{ij}$. However, when the 
total eNB rate exceeds the inflection point rates sum $\sum b_{ij}$ of all real-time applications incumbent in the system, eNB can allot more resources to the delay-tolerant applications with ease of mind. This behavior is observed with the rate increase and bid value plummet that take place after the eNB rate surpasses the inflection points sum, i.e. $R = \sum b_{ij} = 105$, in Figure \ref{fig:sim:app_allocated_rates}.

Furthermore, the improvement in the Algorithms \ref{alg:UE_EURA} and \ref{alg:eNodeB_EURA} over the Algorithms \ref{alg:UE_distributed} and \ref{alg:eNodeB_distributed} can be observed in the fluctuation reduction of the shadow price depicted in Figure \ref{fig:sim:diff_log_users_utilities}, in which the decay function stabilizes the rate allocation by eliminating oscillations. Such an allocation behavior is similarly seen for Algorithm \ref{alg:UE_EURA} and \ref{alg:eNodeB_EURA} over Algorithm \ref{alg:UE_distributed} and \ref{alg:eNodeB_distributed} for $R > \sum b_{ij} = 105$, but Algorithm \ref{alg:UE_EURA} and \ref{alg:eNodeB_EURA} fails to assign the optimal rates and bids for $R < \sum b_{ij} = 105$. Therefore, Algorithm \ref{alg:UE_EURA} and \ref{alg:eNodeB_EURA} is robust under scarce resource availability circumstances.

Next, section \ref{sec:Pricing} discusses the pricing capability of the proposed resource allocation modi operandi, and presents germane simulation results.

\subsection{Pricing for $10\le R\le200$} \label{sec:Pricing}
As we explained before, Figure \ref{fig:sim:app_allocated_rates} shows the final rates and bids of different applications with varying eNB bandwidth, and the applications bids are proportional to the allocated rates. For example, the real-time applications (sigmoidal utilities) bid higher when the eNB resources are scarce and their bids reduce as $R$ increases. Therefore, the pricing, proportional to the bids, is \textit{traffic-dependent} which outfits service providers with the option to escalate the service price for their subscribers when the traffic load on the system is high. Thereby, service providers can motivate mobile subscribers to utilize the network when the traffic load is low in that they will be paying less for the same services by using the network during off-peak hours.

The shadow price $p(n)$, representing the total price per unit bandwidth for all users and applications, is illustrated in Figure \ref{fig:multiple_app_shadow_price} when eNB rate changes. As we can observe, the price is high under high-traffic situations, implied by a fixed number of users with less available resources ($R$ is small), and it decreases for low-traffic circumstances when the same number of users have the luxury of more resources ($R$ is large). It is particularly noticeable that large plummets in the shadow price occur after $R=\{15, 25, 85, 105\}$ which are essentially the points at which the rate for one of the real-time application utilities exceeds that of its inflection point. Furthermore, a large decrease is visible at the sum of the inflection points, i.e. $\sum_{i=1}^{k} r_{ij}^{\text{inf}}$. Here, $k = \{1,2,...,M\}$ is the users index, $M$ is the number of users, and $i$ is the user with the maximum utility slope $\arg\max_i S_i(r_i)$, in our case user 3 ($b_{3j}=15$) followed by 
user 2 ($b_{2j}=10$) then the three users 1, 5, 6 which have almost the same $S_i(r_i)$ ($b_{1j}=5, b_{5j}=25, b_{6j}=30$), and ultimately user 4 ($b_{4j}=20$). The larger the difference between slopes $\Delta S_{ij} =  |S_i(r_i)-S_j(r_j)|$, the higher the change in the shadow price $p(n)$ plot vs. $R$.

\begin{figure}[t!]
\centering
  \includegraphics[width=\plotwidth]{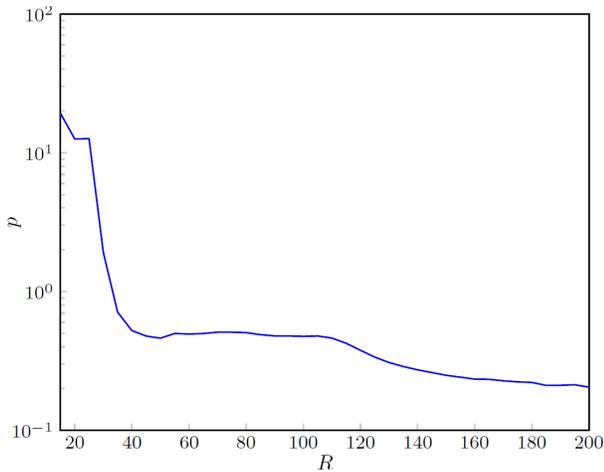}
  \caption{Shadow Price $p$ vs. eNB Resources $R$: Availability of more eNB resources reduces the shadow price.}
  \label{fig:multiple_app_shadow_price}
\end{figure}
\section{Conclusion}\label{sec:concl}
In this paper, we introduced a novel QoS-minded centralized and a distributed algorithm for the resource allocation within the cells of cellular communications systems. We formulated the centralized and distributed approaches as respectively a singular and double utility proportional fairness optimization problems, where the former allocated running applications rates directly by the allocation entity such as an eNB in response to the UE utility parameters sent, whereas the latter assigned the UE rates by the eNB in its first stage followed by the application rate allocation by the UEs in its second stage. Users ran both delay-tolerant and real-time applications mathematically modelled correspondingly as logarithmic and sigmoidal utility functions, where the function values represented the applications QoS percentage. Both of the proposed resource allocation formulations incorporated the service differentiation, application status differentiation modelling the applications usage percentage, and subscriber 
differentiations amongst subscribers priority within networks into their formulation. Not only did we prove that the proposed resource allocation problems were convex and solved them through Lagrangian of their dual problems, but also we proved the optimality of the rate assignments and the mathematical equivalence of the proposed distributed and centralized resource allocation schemes. Furthermore, we proved the mathematical equivalence of the distributed and centralized approaches by showing the both methods yield in identical optimal rates and pledged bids during their resource allocation processes.

Furthermore, we analyzed the algorithm convergence under varying sums of resources available to the eNB and introduced robustness into the distributed algorithm by incorporating decay functions into the aforementioned algorithm so that it converged to optimal rates for both high and low traffic loads occurring during the day by damping the rate assignment fluctuations resulting from scarcity of resources that appeared particularly in peak-traffic circumstances.

\bibliographystyle{ieeetr}
\bibliography{pubs}

\begin{thebibliography}{10}

\bibitem{GhorbanzadehCNC2015_1}
M.~Ghorbanzadeh, A.~Abdelhadi, A.~Amanna, J.~Dwyer, and T.~Clancy,
  ``Implementing an optimal rate allocation tuned to the user quality of
  experience,'' in {\em IEEE IInternational Conference on Computing, Networking
  and Communications (ICNC) Workshop CCS}, 2015.

\bibitem{EricssonMobilityReport2013}
``Ericsson mobility report,'' in {\em Ericsson Press Release}, 2013.

\bibitem{QoS_3GPP}
H.~Ekstrom, ``Qos control in the 3gpp evolved packet system,'' {\em IEEE
  Communications Magazine}, 2009.

\bibitem{GhorbanzadehICNC2013}
M.~Ghorbanzadeh, Y.~Chen, K.~Ma, C.~Clancy, and R.~McGwier, ``A neural network
  approach to category validation of android applications,'' in {\em IEEE
  Conference on Computing, Networking, and Communications (ICNC)}, 2013.

\bibitem{Boyd}
S.~Boyd and L.~Vandenberghe, {\em Introduction to convex optimization with
  engineering applications}.
\newblock Course Reader, 2001.

\bibitem{kelly98ratecontrol}
F.~Kelly, A.~Maulloo, and D.~Tan, ``Rate control in communication networks:
  shadow prices, proportional fairness and stability,'' in {\em Journal of the
  Operational Research Society}, 1998.

\bibitem{Low99optimizationflow}
S.~Low and D.~Lapsley, ``Optimization flow control, i: Basic algorithm and
  convergence,'' {\em IEEE/ACM Transactions on Networking}, 1999.

\bibitem{Shenker95fundamentaldesign}
S.~Shenker, ``Fundamental design issues for the future internet,'' {\em IEEE
  Journal on Selected Areas in Communications}, 1995.

\bibitem{Lee05non-convexoptimization}
J.~Lee, R.~Mazumdar, and N.~Shroff, ``Non-convex optimization and rate control
  for multi-class services in the internet,'' {\em IEEE/ACM Transactions on
  Networking}, 2005.

\bibitem{DL_PowerAllocation}
J.~Lee, R.~Mazumdar, and N.~Shroff, ``Downlink power allocation for multi-class
  wireless systems,'' {\em IEEE/ACM Transactions in Networking}, 2005.

\bibitem{AbdelhadiCNC2014}
A.~AbdelHadi and C.~Clancy, ``{A Utility Proportional Fairness Approach for
  Resource Allocation in 4G-LTE},'' in {\em IEEE International Conference on
  Computing, Networking, and Communications (ICNC), CNC Workshop}, 2014.

\bibitem{AbdelhadiPIMRC2013}
A.~Abdelhadi and C.~Clancy, ``{A Robust Optimal Rate Allocation Algorithm and
  Pricing Policy for Hybrid Traffic in 4G-LTE},'' in {\em IEEE nternational
  Symposium on Personal, Indoor, and Mobile Radio Communications (PIMRC)},
  2013.

\bibitem{AbdelhadiMobicom2013}
A.~Abdelhadi, C.~Clancy, and J.~Mitola, ``A resource allocation algorithm for
  users with multiple applications in 4g-lte,'' in {\em ACM Workshop on
  Cognitive Radio Architectures for Broadband (MobiCom Workshop CRAB)}, 2013.

\bibitem{RebeccaThesis}
R.~Kurrle, ``Resource allocation for smart phones in 4g lte advanced carrier
  aggregation,'' {Master Thesis}, {Virginia Tech}, 2012.

\bibitem{ShajaiahICNC2014}
H.~Shajaiah, A.~Abdelhadi, and C.~Clancy, ``Spectrum sharing between public
  safety and commercial users in 4g-lte,'' in {\em IEEE International
  Conference on Computing, Networking and Communications (ICNC)}, 2014.

\bibitem{ShajaiahMILCOM2013}
H.~Shajaiah, A.~Abdelhadi, and T.~Clancy, ``Utility proportional fairness
  resource allocation with carrier aggregation in 4g-lte,'' in {\em IEEE
  Military Communications Conference (MILCOM)}, 2013.

\bibitem{ShajaiahPIMRC2014}
H.~Shajaiah, A.~Abdelhadi, and C.~Clancy, ``Multi-application resource
  allocation with users discrimination in cellular networks,'' in {\em IEEE
  nternational Symposium on Personal, Indoor and Mobile Radio Communications
  (PIMRC)}, 2014.

\bibitem{ShajaiahCCS2014}
H.~Shajaiah, A.~Abdelhadi, and C.~Clancy, ``Utility proportional fairness
  resource allocation with carrier aggregation in 4g-lte,'' in {\em IEEE
  International Conference on Computing, Networking and Communications (ICNC)
  Worshop CCS)}, 2015.

\bibitem{DBLP:conf/globecom/TychogiorgosGL11}
G.~Tychogiorgos, A.~Gkelias, and K.~Leung, ``A new distributed optimization
  framework for hybrid adhoc networks,'' in {\em GLOBECOM Workshops}, 2011.

\bibitem{DBLP:conf/qosip/Harks05}
T.~Harks, ``Utility proportional fair bandwidth allocation: An optimization
  oriented approach,'' in {\em QoS-IP}, 2005.

\bibitem{UtilityFairness}
G.~Tychogiorgos, A.~Gkelias, and K.~Leung, ``Utility proportional fairness in
  wireless networks,'' IEEE International Symposium on Personal, Indoor, and
  Mobile Radio Communications (PIMRC), 2012.

\bibitem{utility_fair}
T.~Nandagopal, T.~Kim, X.~Gao, and V.~Bharghavan, ``Achieving mac layer
  fairness in wireless packet networks,'' in {\em Proceedings of the 6th annual
  International Conference on Mobile Computing and Networking (Mobicom)}, 2000.

\bibitem{GhorbanzadehICNC2015}
M.~Ghorbanzadeh, A.~AbdelHadi, and C.~Clancy, ``A utility proportional fairness
  approach for resource block allocation in cellular networks,'' in {\em IEEE
  IInternational Conference on Computing, Networking and Communications
  (ICNC)}, 2015.

\bibitem{Erpek2015}
T.~Erpek, A.~Abdelhadi, and C.~Clancy, ``An optimal application-aware resource
  block scheduling in lte,'' in {\em IEEE International Conference on
  Computing, Networking and Communications (ICNC) Worshop CCS)}, 2015.

\bibitem{GhorbanzadehMILCOM2014}
M.~Ghorbanzadeh, A.~AbdelHadi, and C.~Clancy, ``A utility proportional fairness
  bandwidth allocation in radr-coexistent cellular networks,'' in {\em Military
  Communications Conference (MILCOM)}, 2014.

\bibitem{Tao2008}
T.~Jiang, L.~Song, and Y.~Zhang, ``Orthogonal frequency division multiple
  access fundamentals and applications,'' in {\em Auerbach Publications}, 2010.

\bibitem{GhorbanzadehICC2013}
M.~Ghorbanzadeh, Y.~Chen, and C.~Clancy, ``Fine-grained end-to-end network
  model via vector quantization and hidden markov processes,'' in {\em IEEE
  Conference on Communications (ICC)}, 2013.

\bibitem{AbdelhadiarXiv2014_2}
A.~Abdelhadi and C.~Clancy, ``Context-aware resource allocation in cellular
  networks,'' in {\em arXiv}, 2014.

\bibitem{AbdelhadiarXiv2014_1}
A.~Abdelhadi, A.~Khawar, and C.~Clancy, ``Optimal downlink power allocation in
  cellular networks,'' in {\em arXiv}, 2014.

\bibitem{Boyd2004}
S.~Boyd and L.~Vandenberghe, {\em Convex Optimization}.
\newblock New York, NY, USA: Cambridge University Press, 2004.

\end{thebibliography}
\end{document}